\begin{document}

\title{Multivariate Analysis of Scheduling Fair Competitions}

\author{
Siddharth Gupta\thanks{Ben-Gurion University of the Negev, Israel. \texttt{siddhart@post.bgu.ac.il}}
\and Meirav Zehavi\thanks{Ben-Gurion University of the Negev, Israel. \texttt{meiravze@bgu.ac.il}}
}

\maketitle

\begin{abstract}
A \emph{fair competition}, based on the concept of envy-freeness, is a non-eliminating competition where each contestant (team or individual player) may not play against all other contestants, but the total difficulty for each contestant is the same: the sum of the initial rankings of the opponents for each contestant is the same. Similar to other non-eliminating competitions like the Round-robin competition or the Swiss-system competition, the winner of the fair competition is the contestant who wins the most games. The \sfair\ (\fair) problem can be used to schedule fair competitions whose infrastructure is known. 
In the \fair\ problem, we are given an infrastructure of a tournament represented by a graph $G$ and the initial rankings of the contestants represented by a multiset of integers $S$. The objective is to decide whether $G$ is \emph{$S$-fair}, i.e., there exists an assignment of the contestants to the vertices of $G$ such that the sum of the rankings of the neighbors of each contestant in $G$ is the same constant $k\in\mathbb{N}$. We initiate a study of the classical and parameterized complexity of \fair\ with respect to several central structural parameters motivated by real world scenarios, thereby presenting a comprehensive picture of it. 
\end{abstract}

\thispagestyle{empty}
\newpage
\pagestyle{plain}
\setcounter{page}{1}
 
\section{Introduction}\label{sec:intro}

Various real life situations require to conduct fair competitions. For illustration, suppose we want to schedule a non-eliminating sports competition in which there are $n$ contestants and $n$ grounds located on the circumference of a circle. As organizers, we want to assign a home ground to each contestant in such a way that every contestant $c$ plays only against $r$ contestants whose home ground is nearest to $c$'s home ground rather than all the contestants. The underlying rationale can be time constraints and also to minimize the travel time for each contestant (similar to the {\sc Traveling Tournament} problem, see e.g.~\cite{DBLP:conf/aaai/HoshinoK11,DBLP:conf/aaai/HoshinoK12}). However, the total difficulty for each contestant should be the same, i.e., the sum of the initial rankings of the opponents for each contestant is the same. We can model this problem as an instance of the \sfair\ (\fair) problem, where we are given an infrastructure of a tournament represented by a graph $G$ and the initial rankings of the contestants represented by a multiset of integers $S$. The objective is to decide whether $G$ is \emph{$S$-fair}, i.e., there exists an assignment of the contestants to the vertices of $G$ such that the sum of the rankings of the neighbors of each contestant in $G$ is the same constant $k\in\mathbb{N}$.
\remove{where we are given a graph $G$ and a multiset of integers $S$. The objective is to decide whether {\em $G$ is $S$-fair}, i.e., there exists a bijective labeling from $V(G)$ to $S$ such that the sum of the labels of the neighbors of every vertex in $G$ is the same constant $k \in \mathbb{N}$. }Here, $k$ is called the {\em $S$-fairness constant}, or simply {\em fairness constant} if $S$ is clear from the context, of $G$. Clearly, the above problem is equivalent to having an $r$-regular graph $G$ with $n$ vertices, one for each ground, and edges connecting each vertex to $r/2$ nearest vertices on the left and $r/2$ nearest vertices on the right, and the objective is to determine whether $G$ is $S$-fair where $S$ is the multiset of the rankings of the contestants  (see Figure~\ref{fi:harareGraph}). As the total difficulty for each contestant in the competition is the same, we refer to such a competition as a \emph{fair competition}.

In general, if we have the infrastructure of the competition (implicitly, like the above example, or explicitly) and we want to schedule a fair competition, we can model the problem as that of determining whether the graph representing the infrastructure of the competition is $S$-fair where $S$ is the multiset of the rankings of the contestants. This situation is very frequently observed in on-line games or in other recurring competitions - in such competitions, the infrastructure of the competition is fixed and the set of contestants keeps changing.

Scheduling competitions and tournaments with different objectives is a well studied problem in the literature. There are, mainly, two fundamental competition designs, with all other designs considered as variations and hybrids. The first one is the elimination (or knockout) competition, in which the contestants are mapped to the leaf nodes of a complete binary tree. Contestants mapped to nodes with same parent compete against each other in a match, and the winner of the match moves up the tree. The contestant who reaches the root node is the winner of the tournament. The second one is the non-eliminating competition, in which no contestant is eliminated after one or few loses, and the winner is decided at the end of all the games by selecting the contestant with largest number of wins.

In recent years, algorithmic perspectives of scheduling both kinds of competitions have received significant attention by the computational social choice community.
We will first discuss elimination competitions, followed by non-eliminating competitions. With respect to elimination competitions, the design of a fair elimination competition under various definitions of being fair has received notable attention~\cite{hwang1982new,DBLP:journals/tamm/Schwenk00,DBLP:conf/atal/VuS10,DBLP:journals/tist/VuS11}. In this context, it is also relevant to mention the {\sc Tournament Fixing} problem. Here, we are given $n$ contestants, an encoding of the outcome of each potential match between every two contestants  as a digraph $D$, and a favorite contestant $v$: the goal is to design an elimination tournament so that $v$ wins the tournament. This problem was introduced by Vu {\em et~al.}~\cite{DBLP:conf/atal/VuAS09}. After this, it was extensively studied~from~both combinatorial and algorithmic (as well as parameterized) points of view~\cite{DBLP:conf/aaai/AzizGMMSW14,DBLP:conf/ijcai/GuptaR0Z18,DBLP:conf/aaai/KimSW16,DBLP:conf/ijcai/KimW15,DBLP:conf/atal/KonickiW19,DBLP:conf/aaai/RamanujanS17,DBLP:conf/wine/StantonW11,DBLP:conf/ijcai/StantonW11,DBLP:conf/aaai/Williams10}. 

With respect to non-eliminating competitions, a round-robin tournament (RRT) is one of the most popular forms, in which each contestant plays every other contestant~\cite{DBLP:journals/eor/ScarfYB09}. A well studied problem regarding RRTs~is~the {\sc Traveling Tournament} problem, where the goal is to design a fair RRT by minimizing the total travel distance for every team~\cite{DBLP:conf/aaai/GoerigkHKW14,DBLP:conf/aaai/HoshinoK11,DBLP:conf/aaai/HoshinoK12,DBLP:journals/scheduling/MeloUR09,DBLP:journals/scheduling/UthusRG12,DBLP:conf/mfcs/XiaoK16}. Another related problem is to design a fair RRT by minimizing the number of ``breaks'' during the tournament~\cite{DBLP:conf/patat/RibeiroU06,DBLP:journals/orl/HofPB10,DBLP:journals/orl/ZengM13}. Despite of being a popular non-eliminating competition, RRTs have some disadvantages. The first disadvantage of this format is the long tournament length, as each contestant plays against all other contestants. From the fairness point of view, a second disadvantage of this format, also mentioned in~\cite{DBLP:journals/eor/ScarfYB09}, is that it favors the {\em strongest} contestants (i.e., the contestants with the highest initial ranking). To see this,~let~$\mathcal{R} = \{r_1, r_2, \ldots, r_n\}$ be the initial rankings of the $n$ contestants, where $r_i$ is the initial ranking of contestant $i$, and let $R$ be the total sum of the rankings. In RRT, the total difficulty faced by contestant $i$ is $R-r_i$, which shows that the total difficulty faced by a contestant increases as we go from the strongest contestants to the weakest contestants. In light of the above disadvantage, motivated by one of the definitions proposed for fairness in~\cite{DBLP:journals/eor/ScarfYB09,DBLP:journals/tamm/Schwenk00,DBLP:journals/tist/VuS11}, and based on a popular fairness concept called \emph{envy-freeness} introduced by Foley~\cite{foley1967resource} in the study of fair division and allocation problems in multi-agent systems (see, e.g.~\cite{DBLP:conf/atal/BarmanG0KN19,DBLP:conf/ijcai/BenabbouCEZ19,DBLP:conf/atal/BeynierBLMRS19,DBLP:conf/sigecom/GhodsiHSSY18}), we define a \emph{fair competition} in an attempt to address both the above disadvantages with RRTs. Here, a \emph{fair competition} is one where each contestant plays with a \emph{subset} of all other contestants, yet the total difficulty for each contestant in the competition is the same. Similar to an envy-free division where no agent feels envy of another agent's share, in a fair competition no contestant feels envy about another contestant's schedule as the total difficulty for each contestant is the same.

Apart from scheduling fair competitions, \fair\ can be used to model other computational problems in social choice. For example, suppose we have $m$ candidates and $n$ jobs, and every job is associated with an integer ``reward''. Every candidate can choose $r$ jobs and every job is chosen by exactly one candidate. Now, we want to get an assignment of the jobs to the candidates such that the total reward collected by every candidate is $k$, for some integer $k$. Then, this is equivalent of having a graph $G$ that is a collection of $m$ stars, each having $r$ leaves, with $n$ total leaves, and the objective is to determine whether $G$ is $S$-fair with the fairness constant $k$ where $S$ is the union of (i) the multiset of rewards and (ii) the multiset $S' =\{k, \ldots, k\}$ containing the element $k$ $m$ times. Intuitively, $S'$ represents the multiset of rewards collected by every candidate. Every star vertex corresponds to a candidate $c$, and its leaves correspond to the jobs $c$ is assigned to. 


\begin{figure}
\centering
\includegraphics[scale=0.5]{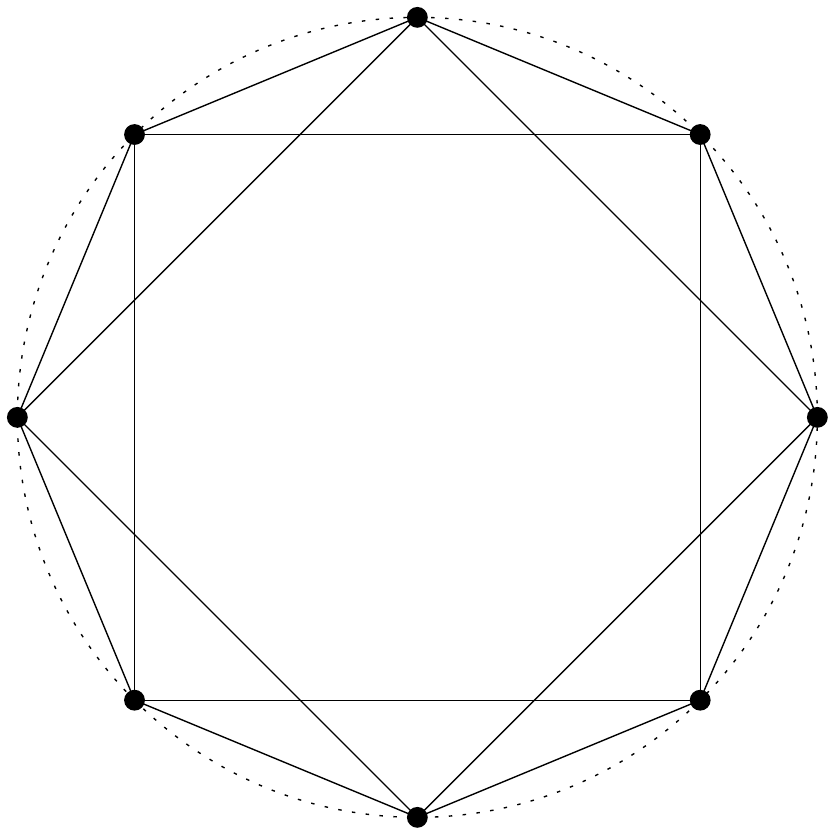}
\caption{Example of a $4$ regular graph where every vertex is connected to $2$ vertices on the left and $2$ on the right.}\label{fi:harareGraph}
\end{figure}

The \fair\ problem can also be used to design semi-magic and magic squares~\cite{wiki:xxx} defined as follows. A {\em semi-magic square} is an $n \times n$ grid $(n \geq 3)$ filled with positive integers from a multiset $I$ such that each cell contains a distinct integer occurrence in $I$ and the sum of integers in each row and each column is the same. A {\em magic square} is a semi-magic square with the additional constraint that the sum of the integers in both the diagonals is also the same and equal to the sum of integers in each row and each column. 

We can model a semi-magic square (and similarly a magic square) as an instance of \fair\ as follows. Let $G$ be a graph with a vertex for each cell in the grid and a vertex for each row and each column, and edges between every cell vertex and its corresponding row and column vertices (see Figure~\ref{fi:semiMagic}). Let the required sum be $k$ and let $S$ be the union of the multiset $I$ and the multiset containing $k-1$ and $1$, each occurring $n$ times. The following observation shows that an $n \times n$ grid can have a semi-magic square filled with integers from the multiset $I$ if and only if $G$ is $S$-fair. 

\begin{observation}
Given an $n \times n$ grid $\mathcal{G}$ $(n \geq 3)$, and a multiset of $n^2$ positive integers $I$, let $k$ be the required sum of any semi-magic square on $\mathcal{G}$, and $G$ and $S$ be the corresponding graph and the multiset respectively. Then $\mathcal{G}$ can have a semi-magic square, filled with integers from the multiset $I$ if and only if $G$ is $S$-fair with fairness constant $k$. 
\end{observation}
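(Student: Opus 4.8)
The plan is to exploit the rigid degree structure of $G$. In $G$ each cell vertex has degree exactly $2$ (it is adjacent only to its row vertex and its column vertex), whereas each of the $n$ row vertices and each of the $n$ column vertices has degree exactly $n$ (adjacent to the $n$ cells of its line). Write $r_i$ for the $i$-th row vertex, $c_j$ for the $j$-th column vertex, denote the cell vertex in position $(i,j)$ also by $(i,j)$, and let $\phi\colon V(G)\to S$ be a labeling. Since any semi-magic square on $\mathcal{G}$ has total weight $nk$ and this total equals $\sum_{x\in I}x$, we have $k=(\sum_{x\in I}x)/n$; in particular $k\ge n\ge 3$, so the two ``extra'' values $1$ and $k-1$ are distinct and positive.

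For the forward direction ($\Rightarrow$) I would use the canonical labeling: place the entry of cell $(i,j)$ on the cell vertex $(i,j)$, which assigns exactly the multiset $I$ to the cell vertices; assign $k-1$ to every row vertex and $1$ to every column vertex. This is a bijection onto $S$. The neighbor-sum of $r_i$ (resp.\ $c_j$) is the $i$-th row sum (resp.\ $j$-th column sum) of the square, namely $k$, and the neighbor-sum of cell $(i,j)$ is $\phi(r_i)+\phi(c_j)=(k-1)+1=k$. Hence $G$ is $S$-fair with fairness constant $k$.

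For the reverse direction ($\Leftarrow$), suppose $\phi$ witnesses that $G$ is $S$-fair. The degree-$2$ cell vertices yield $\phi(r_i)+\phi(c_j)=k$ for all $i,j$. Fixing $i$ and varying $j$ shows $\phi(c_j)=k-\phi(r_i)$ is independent of $j$, so all column labels coincide; symmetrically all row labels coincide. Write $a:=\phi(r_i)$ and $b:=\phi(c_j)$, so $a+b=k$, and note that $S$ must contain $n$ copies of $a$ and $n$ copies of $b$ (used on the lines). Let $C$ be the multiset of labels on the cell vertices; then $C$ is obtained from $S$ by removing $n$ copies of $a$ and $n$ copies of $b$, and the degree-$n$ constraints say precisely that, written into the grid, every row and every column of $C$ sums to $k$. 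It remains to show that the cells carry exactly $I$, i.e.\ that $\{a,b\}=\{1,k-1\}$; this is the crux.

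To pin down $\{a,b\}$ I would argue by counting, using $\sum_{x\in I}x=nk$. If neither $a$ nor $b$ lies in $\{1,k-1\}$, then since the extra part of $S$ contributes only the values $1$ and $k-1$, all $n$ copies of $a$ and all $n$ copies of $b$ must be drawn from $I$; these $2n$ elements contribute $n(a+b)=nk=\sum_{x\in I}x$, forcing the remaining $n^2-2n=n(n-2)$ entries of $I$ to sum to $0$, which is impossible since $n\ge 3$ and the entries of $I$ are positive (the same count also rules out the degenerate possibility $a=b$, where $2n$ copies of $a=k/2$ would be consumed from $I$). On the other hand, $a\in\{1,k-1\}$ forces $b=k-a\in\{1,k-1\}$ as well, and together with $a+b=k$ this gives $\{a,b\}=\{1,k-1\}$. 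Consequently $C=S$ with $n$ copies each of $1$ and $k-1$ removed, which is exactly $I$; the bijection $\phi$ places the distinct occurrences of $I$ on the cells, and the row/column constraints make this arrangement a semi-magic square on $\mathcal{G}$ filled from $I$. The main obstacle is exactly this last step: the neighbor-sum conditions by themselves allow arbitrary complementary line labels $a,b$ with $a+b=k$, and it is only the global identity $\sum_{x\in I}x=nk$ together with positivity of the entries that eliminates every non-canonical choice.
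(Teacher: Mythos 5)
Your proof is correct, and while the forward direction is the same canonical labeling as the paper's, your converse takes a genuinely different route at the crux, namely showing that the line vertices must carry exactly the auxiliary labels $1$ and $k-1$. The paper argues locally around the value $k-1$: any vertex adjacent to a label $k-1$ must have degree $2$ (a degree-$n$ vertex with such a neighbor would have neighbor sum at least $(k-1)+(n-1)>k$, since all labels are positive), so every occurrence of $k-1$ sits on a row or column vertex, and the degree-$2$ cells then force the $1$'s onto the remaining line vertices and the elements of $I$ onto the cells. You instead first extract the structural fact that all row vertices share one label $a$ and all column vertices share one label $b$ with $a+b=k$ (from the degree-$2$ cell constraints), and then eliminate every non-canonical pair $\{a,b\}\neq\{1,k-1\}$ via the global identity $\sum_{x\in I}x=nk$ plus positivity: $2n$ line labels drawn from $I$ would already exhaust the entire sum of $I$, leaving $n(n-2)>0$ positive entries summing to $0$. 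Both arguments are sound and elementary; the paper's is shorter and leans on the special value $k-1$ being ``almost $k$,'' whereas yours makes the homogeneity of the line labels explicit, pinpoints exactly what excludes the alternative complementary pairs, and handles the degenerate case $a=b=k/2$ explicitly, which the paper's terser write-up glosses over.
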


\begin{proof}
First, assume that there exists a semi-magic square $M$ on $\mathcal{G}$ filled with integers from the multiset $I$. Then, the sum of integers in each row and each column is $k$. We define an assignment from $V(G)$ to $S$ as follows. Every cell vertex gets the same integer label as the integer it is filled with in $M$. Clearly, the sum of neighbors for every row vertex and every column vertex in $G$ is $k$. Every row is labeled $k-1$ and every column vertex is labeled $1$. Clearly, the sum of neighbors for every cell vertex is $k$ as it is adjacent to exactly one row vertex and exactly one column vertex.

Conversely, let $G$ be $S$-fair. Then, for every vertex in $G$, the sum of the neighbors is $k$. From the construction of $G$, the degree of any row and any column vertex is $n$ and the degree of any cell vertex is $2$. Let $v$ be a vertex having a neighbor whose label is $k-1$. As the sum of the neighbors of $v$ is $k$ and every integer in $S$ is positive, the degree of $v$ must be $2$ and the label of the other neighbor of $v$ is $1$. This implies that $v$ can only be a cell vertex, and all and only the labels $k-1$ and $1$ are used by the $n$ row and $n$ column vertices. So, the labels assigned to cell vertices belong to $I$. As for every row and column vertex, the sum of the neighbors is $k$, by filling every cell in $\mathcal G$ with the label of its corresponding cell vertex, we get a semi-magic square.
\end{proof}

\begin{figure}
\centering
\includegraphics[scale=0.5]{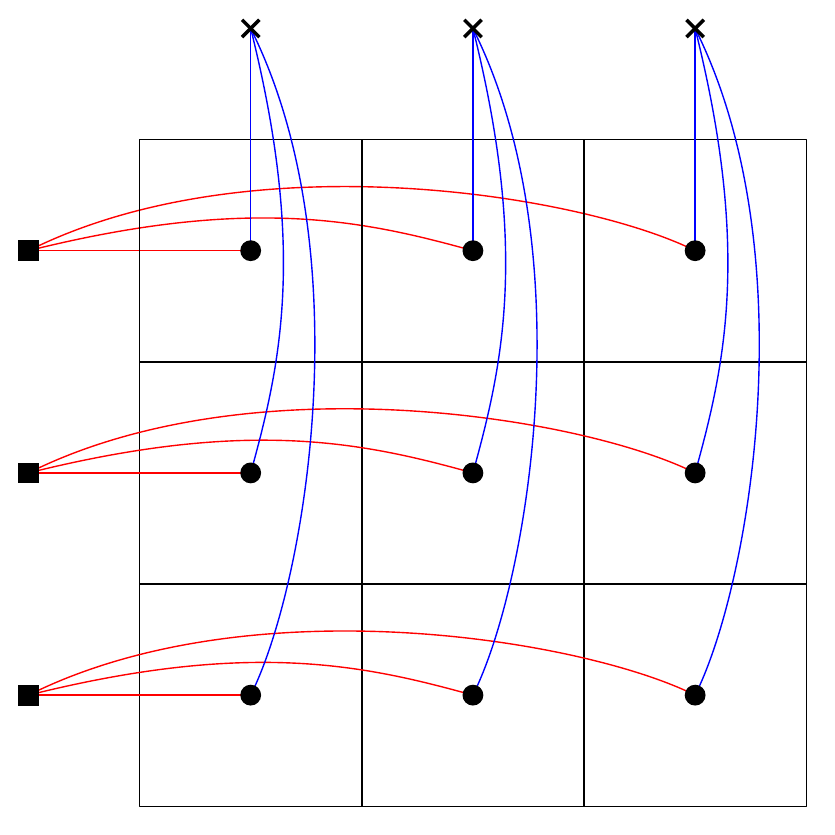}
\caption{Example of the graph $G$ corresponding to a $3 \times 3$ - grid.}\label{fi:semiMagic}
\end{figure}
\subsection{Our Contribution and Methods}

To the best of our knowledge, while \fair\ has been studied extensively from a combinatorial point of view (Section~\ref{subsec:smagic}), close to nothing is known from an algorithmic point of view. We initiate a systematic algorithmic study of \fair. On the one hand, we show \textsf{NP}-hardness results on special graph classes, which imply \textsf{para-NP}-hardness for the problem with respect to several combinations of structural graph parameters. (For basic notions in parameterized complexity, see Section~\ref{sec:prelims}). On the other hand, we show that the problem is \emph{fixed-parameter tractable (\textsf{FPT})} for four different combinations of these parameters. 

The choice of our parameters is motivated by the real world examples from the introduction. In the example of fair competition, we may want every contestant to play only a fraction of the total possible games, which in turn means that the maximum degree $\Delta$ of the infrastructure graph is small compared to the total number of contestants. Similarly, it is likely to happen that a lot of contestants have the same rankings or a lot of jobs have the same rewards, which implies that the number $\alpha$ of unique elements in $S$ is small compared to the total number of contestants or jobs. In the case of job assignment, the underlying graph is a set of stars, which has treewidth $1$ and the size of minimum feedback vertex set is $0$. Moreover, treewidth, feedback vertex set and vertex cover are central parameters in the field of parameterized complexity. 

\begin{table}[!t]
\caption{Summary of our results. Here $\Delta, \tw, \fvs $ and $ \vc$ denote the maximum degree, treewidth, feedback vertex set number and vertex cover number of the input graph, respectively; $\alpha$ denotes the number of distinct elements in $S$. Note that $\tw \leq \fvs \leq \vc$.}
\label{tab:abr} 
\centering
\begin{tabular}{l p{6cm}}\toprule
\textit{Parameters} & \textit{Parameterized Complexity} \\ \midrule
$\tw + \Delta$ & \NPH for $\tw = 3, \Delta = 3$ (also for regular graphs) [Theorem~\ref{the:DegreeTreewidth}]\\
\hline
$\alpha + \Delta$ & \NPH for $\alpha = 3, \Delta = 6$ (also for regular graphs) [Theorem~\ref{the:DeltaAlpha}]\\
\hline
$\fvs + \Delta$ & \NPH for $\fvs = 0, \Delta = 3$ [Theorem~\ref{the:FVSDelta}]\\
\hline
$\fvs + \Delta + \alpha$ & FPT [Theorem~\ref{the:fvsAlphaDelta}]\\
\hline
$\fvs$ & FPT (for regular graphs) [Theorem~\ref{the:fvsrRegular}]\\
\hline
$\vc + \alpha$ & FPT [Theorem~\ref{the:vcAlpha}]\\ \bottomrule
\end{tabular}
\end{table}
Our main results are as follows (summarized in Table~\ref{tab:abr}). First, we show that \fair\ is \NPH for three different graph classes: disjoint unions of $K_{3,3}$'s, disjoint unions of $K_{1,3}$'s and $6$-regular graphs with $3$ distinct labels. Consequently, it is \paraH parameterized by (i) treewidth plus maximum degree, (ii) maximum degree plus feedback vertex set number, and (iii) maximum degree plus the number of distinct labels. The \paraH results hold even for regular graphs when parameterized by either treewidth plus maximum degree or maximum degree plus the number of distinct labels. 

Second, we show that \fair\ is  \FPT parameterized by (i) maximum degree plus feedback vertex set number plus the number of distinct labels, (ii) vertex cover number plus the number of distinct labels, and (iii) feedback vertex set number for regular graphs. We derive some of these results by using insights into \fair\ itself when the input graph is a cycle, a disjoint union of stars, or a connected graph with minimum degree $1$, and Integer Linear Programming.
%

Our choice of parameters also shows several borders of (in-)tractability. For example, the problem is \paraH when parameterized by either $\Delta+\alpha$ or by $\fvs+\Delta$, but becomes \FPT when parameterized by $\fvs+\Delta+\alpha$. Similarly, it is \paraH by $\fvs+\Delta$, but becomes \FPT\ by $\fvs$ for regular graphs. Overall, we give a comprehensive picture of the classical and parameterized complexity of \fair. {\bf For lack of space, some results and proofs marked with an asterisk $(*)$ are omitted or sketched. They are included in the full version, which is available as a supplementary material.}
\subsection{Related Work}\label{subsec:smagic}

The \fair\ problem was first introduced by Vilfred~\cite{vilfred} when $S = \{1, 2, \ldots, n\}$. Such a labeling is called \emph{sigma-labeling} in that paper. The concept of fair scheduling was independently studied by Miller {\em et~al.}~\cite{DBLP:journals/ajc/MillerRS03} in $2003$ under the name \emph{$1$-vertex magic} and by Sugeng {\em et~al.}~\cite{sugeng2009distance} under the name \emph{distance magic labeling}. For recent surveys on distance magic labeling, see~\cite{arumugam2012distance,rupnow2014survey}. The \fair\ problem for a general multiset $S$ was first studied by O'Neal and Slater~\cite{DBLP:journals/siamdm/ONealS13}. In the same paper, they also proved that if a graph $G$ is $S$-fair, then the $S$-fairness constant of $G$ is unique. In~\cite{allLabeling}, Slater proved that \fair\ is {\textsf{NP}-hard}. More recently, Godinho {\em et~al.}~\cite{DBLP:journals/endm/GodinhoSA15} studied the special case of \fair\ where $S$ is a set and not a multiset. They gave a simpler proof for the uniqueness of $S$-fairness constant and also exhibited several families of $S$-fair graphs. Recently, the same set of authors studied a measure called \emph{distance magic index} related to $S$-fair labeling and determined the distance magic index of trees and complete bipartite graphs in~\cite{distanceIndex}. There has also been a long line of studies on other kinds of graph labeling, like {\em \{0,1\}-\fair}, where we consider the closed neighborhood of every vertex instead of open neighborhood (i.e., the vertex itself is also considered in its neighbor set). Another example is {\em vertex-bimagic labeling}, in which there exists two constants $k_1$ and $k_2$ such that the sum of neighbors of every vertex is either $k_1$ or $k_2$. For more information, see the recent survey~\cite{gallian2018dynamic}.
%

\section{Preliminaries}\label{sec:prelims}

\paragraph{Sets and Functions.} Given two multisets $A = \{a_1, a_2, \ldots, a_n\}$ and $B = \{b_1, b_2, \ldots, b_m\}$, their \emph{disjoint union} is the multiset $S = A \uplus B = \{a_1, a_2 , \ldots, a_n, b_1, b_2, \ldots, b_m\}$. For example, let $A = \{1,3,4,5,5\}$ and $B = \{3, 2, 4, 6\}$. Then, $A \uplus B = \{1, 2, 3, 3, 4, 4, 5, 5, 6\}$. Given a multiset $S$, $\alpha(S)$ denotes the number of distinct elements in $S$, and for every $a \in S, \alpha_S(a)$ denotes the number of times $a$ appear in $S$. Given a multiset $A$, $\sum A$ denotes the sum of its elements (in case they are integers), and $|A|$ denotes its size. For any $t \in \mathbb{N}, [t]$ denotes the set $\{1,2,\ldots,t\}$. Given a function $f$ defined on a multiset $A$, $f(A) = \{f(a):a \in A\}$. Let $f: A \rightarrow B$ be a function from a multiset $A$ to a multiset $B$. Then the \emph{restriction of f} to a multiset $A' \subseteq A$ is the function $f|_{A'} : A' \rightarrow B$ given as $f|_{A'}(x) = f(x)$ for every $x \in A'$.

\paragraph{Graphs.} In this paper, we consider only undirected graphs. Given a graph $G$, we denote its vertex set and edge set by $V(G)$ and $E(G)$, respectively. For a vertex $v\in V(G)$, the set of all the neighbors of $v$ in $G$ is denoted by $N_G(v)$, i.e. $N_G(v)=\{u\in V(G)~|~\{u,v\}\in E(G)\}$. The degree of a vertex $v\in V(G)$ in $G$ is denoted by $\degr_G(v)$. When $G$ is clear from the context, we drop the subscript. Given an induced subgraph $H$ of $G$, the set of neighbors of vertices in $H$ which are not in $H$ is denoted by $N_G(H)$, i.e., $N_G(H) = \big(\bigcup_{v \in V(H)} N_G(v)\big) \setminus V(H)$. The maximum and minimum degree of $G$ are denoted by $\Delta(G)$ and $\delta(G)$, respectively. Given a set $V' \subseteq V(G)$, the subgraph of $G$ induced by $V'$ is denoted by $G[V']$. A path on $n$ vertices is denoted by $P_n$. A cycle on $n$ vertices is denoted by $C_n$. A complete bipartite graph with bipartition $A$ and $B$ such that $|A| = m, |B| = n$ is denoted by $K_{m,n}(A, B)$. If $A$ and $B$ are clear from the context, we write $K_{m,n}(A, B)$ as $K_{m,n}$. Given a forest $F$, the set of leaves of $F$ is denoted by $\leaf(F)$. Given a rooted tree $T$, for a vertex $v \in V(T)$, the set of children of $v$ in $T$ is denoted by $\child_T(v)$.

An \emph{$r$-regular graph} is a regular graph where every vertex has degree $r$. An \emph{$n$-star} graph (on $n+1$-vertices) is the complete bipartite graph $K_{1,n}$. Given an $n$-star graph where $n \geq 2$, the \emph{star-vertex} is the unique vertex with degree $n$. The \emph{disjoint union} of two graphs $G_1$ and $G_2$, denoted by $G_1 + G_2$, is the graph with vertex set $V(G_1) \uplus V(G_2)$ and edge set $E(G_1) \uplus E(G_2)$. For any $m \in \mathbb{N}$, we denote the disjoint union of $m$ copies of a graph $G$ by $mG$. Note that, the disjoint union of two or more nonempty graphs is always a disconnected graph. For other standard notations not explicitly defined here, we refer to the book \cite{Diestelbook}.

The {\em treewidth, vertex cover number} and {\em feedback vertex set number} of a graph $G$ are defined as follows.

\begin{definition}[{\bf Treewidth}]
A \emph{tree decomposition} of a graph $G$ is a tree $T$ whose nodes, called \emph{bags}, are labeled by subsets of vertices of $G$. For each vertex $v$, the bags containing $v$ must form a nonempty contiguous subtree of $T$, and for each edge $\{u,v\}$, at least one bag must contain both $u$ and $v$. The \emph{width} of the decomposition is one less than the maximum cardinality of any bag, and the \emph{treewidth} $\tw(G)$ of $G$ is the minimum width of any of its tree decompositions. 
\end{definition}

Based on the definition of treewidth, we have the following observation about disjoint union.

\begin{observation}\label{obs:treewidthDisjoint}
The treewidth of the disjoint union of two vertex-disjoint graphs $G_1$ and $G_2$ is max$\{\tw(G_1), \tw(G_2)\}$.
\end{observation}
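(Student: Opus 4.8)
The plan is to prove the equality $\tw(G_1 + G_2) = \max\{\tw(G_1), \tw(G_2)\}$ by establishing two inequalities. The lower bound $\tw(G_1 + G_2) \geq \max\{\tw(G_1), \tw(G_2)\}$ is the easy direction and follows from monotonicity of treewidth under taking subgraphs: since both $G_1$ and $G_2$ are (induced) subgraphs of $G_1 + G_2$, any tree decomposition of $G_1 + G_2$ restricts to a valid tree decomposition of each $G_i$ (intersect every bag with $V(G_i)$), so neither $\tw(G_1)$ nor $\tw(G_2)$ can exceed $\tw(G_1 + G_2)$.

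For the upper bound, the plan is to take optimal tree decompositions $T_1$ of $G_1$ and $T_2$ of $G_2$, of widths $\tw(G_1)$ and $\tw(G_2)$ respectively, and combine them into a single tree decomposition of $G_1 + G_2$ whose width is $\max\{\tw(G_1), \tw(G_2)\}$. Since $G_1 + G_2$ has no edges between $V(G_1)$ and $V(G_2)$, I can simply join $T_1$ and $T_2$ into one tree: pick an arbitrary bag $B_1$ of $T_1$ and an arbitrary bag $B_2$ of $T_2$ and add a single edge connecting these two nodes (alternatively, one can introduce a fresh node adjacent to both, but a direct edge suffices). The resulting tree $T$ is connected. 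The first verification step is the vertex-coverage and connectivity-of-subtree condition: for each vertex $v$, if $v \in V(G_1)$ its bags form the same contiguous subtree as in $T_1$, and likewise for $v \in V(G_2)$ via $T_2$; since no vertex appears in both halves, adding the one connecting edge does not merge bags of any single vertex across the two subtrees, so each vertex's set of bags remains a nonempty contiguous subtree. The second verification step is edge coverage: every edge of $G_1 + G_2$ lies entirely in $V(G_1)$ or entirely in $V(G_2)$, hence is already covered within $T_1$ or $T_2$. Finally, the width of $T$ is the maximum bag size minus one, and since the bags of $T$ are exactly the bags of $T_1$ together with the bags of $T_2$, this maximum is $\max\{\tw(G_1), \tw(G_2)\} + 1$, giving width $\max\{\tw(G_1), \tw(G_2)\}$.

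Combining the two bounds yields the claimed equality. I do not anticipate a genuine obstacle here, as the result is standard; the only point requiring mild care is the subtree-connectivity condition after gluing, and the key observation that makes it trivial is precisely that $V(G_1)$ and $V(G_2)$ are disjoint, so no vertex's bags are split across the newly added edge. It is worth noting that the statement says "two vertex-disjoint graphs," which is consistent with the definition of disjoint union given earlier in the excerpt (where $V(G_1 + G_2) = V(G_1) \uplus V(G_2)$), and this disjointness is exactly what the argument exploits.
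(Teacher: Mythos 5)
Your proof is correct. The paper states this as an immediate observation ``based on the definition of treewidth'' and supplies no proof at all, so there is nothing to compare against; your argument---the lower bound by restricting bags to each $V(G_i)$ (using that each $G_i$ is a subgraph), and the upper bound by joining optimal decompositions of $G_1$ and $G_2$ with a single edge between two arbitrary bags, with the subtree-connectivity condition preserved precisely because $V(G_1)$ and $V(G_2)$ are disjoint---is exactly the standard justification the authors are implicitly relying on.
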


\begin{definition}[{\bf Vertex Cover}]
A \emph{vertex cover} of a graph $G$ is a set of vertices in $G$ such that every edge in $G$ has at least one endpoint in the set. We denote the minimum size of a vertex cover of $G$ by $\vc(G)$.
\end{definition}

\begin{definition}[{\bf Feedback Vertex Set}]
A \emph{feedback vertex set} of a graph $G$ is a set of vertices whose removal results in an acyclic graph. We denote the minimum size of a feedback vertex set of $G$ by $\fvs(G)$.
\end{definition}

We will show hardness results from {\sc $3$-Partition} and a variant of SAT called \sat (which were proved to be strongly \NPC and \NPC in \cite{DBLP:books/fm/GareyJ79} and \cite{DBLP:journals/dam/PorschenSSW14}, respectively), defined as follows.

\begin{definition}[{\bf $3$-Partition}]\label{def:3Partition}
Given a multiset $W$ of $n = 3m$ positive integers, for some $m \in \mathbb{N}$, can $W$ be partitioned into $m$ triplets $W_1, W_2,\ldots, W_m$ $($i.e., $W = \biguplus_{i \in [m]}W_i$ and $|W_i| = 3$ for every $i \in [m])$ such that for every $i \in [m], \sum W_i = \sum W/m$?
\end{definition}

\begin{definition}[{\bf $3$-XSAT$^3_+$}]
Given a formula in conjunctive normal form (CNF) where all literals are positive, each clause has size exactly $3$, and each variable occurs exactly $3$ times, does there exist a truth assignment to the variables so that each clause has exactly one true variable?
\end{definition}

The following lemma about a \sat formula will be useful throughout the paper.

\begin{lemma}\label{lem:xsat}
Let $\rho$ be a \sat formula with $n$ variables and $m$ clauses. Then $m = n$. Moreover, $\rho$ is satisfiable only if $n$ is divisible by $3$.
\end{lemma}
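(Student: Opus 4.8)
The plan is to establish both statements by elementary double-counting arguments, viewing $\rho$ as a bipartite incidence structure between its $n$ variables and its $m$ clauses.

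For the equality $m = n$, I would count the total number of literal occurrences in $\rho$ in two ways. On the one hand, since every clause has size exactly $3$, the total number of occurrences is $3m$. On the other hand, since every variable occurs exactly $3$ times, the same quantity equals $3n$. Equating the two gives $3m = 3n$, hence $m = n$. This half uses only the two structural constraints from the definition of \sat and makes no reference to satisfiability.

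For the divisibility claim, I would fix a satisfying (exact-one) truth assignment and count the number of occurrences of \emph{true} literals in $\rho$, again in two ways. Because the assignment makes exactly one variable true in each clause, the number of true-literal occurrences, counted clause by clause, is exactly $m$. Counting instead variable by variable, each variable set to true occurs in exactly $3$ clauses (all its occurrences are positive, hence true), so if $t$ denotes the number of true variables, the same quantity equals $3t$. Therefore $m = 3t$, so $m$ is divisible by $3$; combined with $m = n$ from the first part, we conclude that $n$ is divisible by $3$ whenever $\rho$ is satisfiable.

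There is no genuine obstacle here: both parts are short counting arguments, and the only point requiring a little care is to exploit the ``exactly one true literal per clause'' condition (as opposed to ``at least one'', as in ordinary SAT) in order to obtain the clean equality $m = 3t$ in the second count.
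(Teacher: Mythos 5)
Your proposal is correct and is essentially the same argument as the paper's: the paper phrases the double count via a $3$-regular bipartite variable--clause incidence graph (edges counted as $3n$ and $3m$), and for the second part counts the clauses covered by the $k$ true variables (whose neighborhoods are disjoint by the exact-one condition) to get $3k = m = n$, which is exactly your count of true-literal occurrences. No gaps.
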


\begin{proof}
Let $A$ be the set of $n$ vertices corresponding to $n$ variables and $B$ be the set of $m$ vertices corresponding to $m$ clauses. Consider the bipartite graph $G$ with bipartition $A$ and $B$ and edges between a vertex $x \in A$ and a vertex $c \in B$ if and only if the variable corresponding to $x$ is in clause corresponding to $c$. As every variable appears in exactly $3$ clauses and every clause has exactly $3$ variables, $G$ is $3$-regular. Because $G$ is bipartite, the number of edges in $G$ must be equal to both $3n$ and $3m$ which means that $m = n$. 

Now, assume that $\rho$ is satisfiable, i.e. there exists a truth assignment such that every clause has exactly one true variable. Let $k$ be the number of true variables. Let $v_1$ ans $v_2$ be two vertices in $A$ corresponding to two different true variables. Then $N_G(v_1) \cap N_G(v_2) = \emptyset$, otherwise there is a clause $c$ that has two true variables. Therefore, as $G$ is $3$-regular, total number of clauses satisfied is $3k$. As this number is also $m$ (which equals $n$), this means $k = n/3$. So, $\rho$ is satisfiable only if $n$ is divisible by $3$.
\end{proof}

From the above lemma, it is easy to see that \sat remains \NPC even when $n$ is divisible by $3$. So, in the rest of the paper, we assume that given a \sat formula with $n$ variables and $m$ clauses, $m = n$ and $n$ is divisible by $3$.

\paragraph{Fair Non-Eliminating Tournament.} Given an infrastructure of a tournament represented by a graph $G$ and the initial rankings of the contestants represented by a multiset of integers $S$, $G$ is called \emph{$S$-fair} if there exists an assignment of the contestants to the vertices of $G$ such that the sum of the rankings of the neighbors of each contestant in $G$ is the same. Equivalently, given the infrastructure graph $G$ and the multiset of contestants' rankings $S$ with $|S| = |V(G)|$, $G$ is \emph{$S$-fair} if there exists a bijection $f: V(G) \rightarrow S$ such that for every vertex $v \in V(G), \sum f(N(v)) = k$, where $k$ is a constant called \emph{$S$-fairness constant}. 

For any vertex $v \in V(G), f(v)$ is called the \emph{label} of $v$. We denote the set of all bijective functions that satisfy the above property by $\calM$. In \cite{DBLP:journals/siamdm/ONealS13}, O’Neal and Slater showed that if a graph $G$ is $S$-fair, then its $S$-fairness constant is unique.

Since the infrastructure graph of a tournament is an undirected graph and the initial ranking of several players can be the same, we define the \sfair\ (\fair) problem as follows.

\begin{definition} [{\bf Fair-NET Problem}]\label{def:SMagic}
Given an undirected graph $G$ and a multiset of positive integers $S$ with $|S| = |V(G)|$, is $G$ $S$-fair $($i.e. $\calM \neq \emptyset)$?
\end{definition}

The following observations about $S$-fair graphs follow directly from its definition.

\begin{observation} [{\bf Label Swap}]\label{obs:sameNeighborhood}
Let $G$ be an $S$-fair graph. Let $f \in \calM$. Let $u, v \in V(G)$ such that $N_G(u) = N_G(v)$. Consider $f':V(G) \rightarrow S$, defined as follows. For all $w \in V(G) \setminus \{u,v\}, f'(w) = f(w); f'(u) = f(v); f'(v) = f(u)$. Then $f' \in \calM$.
\end{observation}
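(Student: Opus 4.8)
The plan is to verify directly the two conditions defining membership in $\calM$: that $f'$ is a bijection from $V(G)$ onto $S$, and that $\sum f'(N_G(x)) = k$ for every $x \in V(G)$, where $k$ is the $S$-fairness constant witnessed by $f$. Since the construction of $f'$ only exchanges the images assigned to $u$ and $v$ and leaves everything else fixed, both conditions should follow from a short local analysis.

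First I would dispose of bijectivity. As $f'$ is obtained from the bijection $f$ merely by swapping the two values $f(u)$ and $f(v)$, the image multiset of $f'$ is identical to that of $f$, namely all of $S$; since $|V(G)| = |S|$, this already forces $f'$ to be a bijection. The substantive part is the fairness condition. Here the key observation is that $f'$ and $f$ differ only at $u$ and $v$, so for a given vertex $x$ the sum $\sum f'(N_G(x))$ can possibly differ from $\sum f(N_G(x))$ only when $u$ or $v$ lies in $N_G(x)$. I would then exploit the hypothesis $N_G(u) = N_G(v)$ together with the symmetry of adjacency in the undirected graph $G$: for every $x$ we have $u \in N_G(x)$ iff $x \in N_G(u) = N_G(v)$ iff $v \in N_G(x)$. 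Thus, for each vertex $x$, either both $u$ and $v$ belong to $N_G(x)$, or neither does.

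With this equivalence in hand the case analysis closes immediately. If neither $u$ nor $v$ lies in $N_G(x)$, then $f'$ agrees with $f$ on all of $N_G(x)$ and the sum is unchanged. If both lie in $N_G(x)$, their joint contribution is $f'(u) + f'(v) = f(v) + f(u)$, the same pair of labels as before, while every other neighbor is untouched; again the sum equals $k$. I do not expect any genuine obstacle: the only point that requires care is precisely why the hypothesis $N_G(u)=N_G(v)$ is needed, namely that a vertex adjacent to exactly one of $u,v$ would have its neighbor-sum altered by the swap. The symmetry equivalence above shows that no such vertex exists, so fairness is preserved at every vertex and hence $f' \in \calM$.
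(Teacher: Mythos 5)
Your proof is correct and is exactly the direct verification the paper has in mind: it states this observation without proof as following immediately from the definition, and your argument (bijectivity is preserved by a swap; adjacency symmetry plus $N_G(u)=N_G(v)$ forces every vertex to see both of $u,v$ or neither, so every neighborhood sum is unchanged) is the canonical way to fill in that omitted step. No gaps.
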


\begin{observation} \label{obs:sameLabel}
Let $G$ be an $S$-fair graph. Let $u, v \in V(G)$ such that $N_G(u) = N_G(v)$ and $\{u,v\} \in E(G)$. Then, for all $f \in \calM, f(u) = f(v)$. 
\end{observation}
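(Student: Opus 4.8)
The plan is to play off the two fairness constraints that a single labeling $f \in \calM$ must satisfy at $u$ and at $v$, using the fact that these two constraints involve almost exactly the same collection of labels. Fix an arbitrary $f \in \calM$ and let $k$ be its $S$-fairness constant, so that $\sum f(N_G(w)) = k$ for every $w \in V(G)$; I will only invoke this identity for $w = u$ and $w = v$.

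First I would pin down how $N_G(u)$ and $N_G(v)$ compare. Since $G$ is simple, $u \notin N_G(u)$ and $v \notin N_G(v)$, whereas the edge $\{u,v\} \in E(G)$ forces $v \in N_G(u)$ and $u \in N_G(v)$. Combined with the twin hypothesis on $u$ and $v$, this means the two neighborhoods agree except through the connecting edge: writing $W := N_G(u) \cap N_G(v)$ for the common neighbors (note $u,v \notin W$), we have $N_G(u) = W \cup \{v\}$ and $N_G(v) = W \cup \{u\}$, where in each case the union is disjoint.

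The conclusion then falls out by subtraction. Expanding the two constraints gives $\sum_{x \in W} f(x) + f(v) = k$ from the vertex $u$, and $\sum_{x \in W} f(x) + f(u) = k$ from the vertex $v$. Subtracting one equation from the other cancels the common term $\sum_{x \in W} f(x)$ and leaves $f(v) - f(u) = 0$, that is, $f(u) = f(v)$. Since $f \in \calM$ was arbitrary, the equality holds for every $f \in \calM$, as required.

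There is no serious obstacle here; the statement is a one-line consequence of equating two neighbor sums to the same constant $k$. The only point deserving care is the membership bookkeeping in the second step, namely verifying that $f(u)$ never enters $u$'s own neighbor sum but does enter that of $v$ (and symmetrically for $f(v)$), so that precisely these two labels survive the cancellation. Note also that, unlike several of the later arguments in the paper, this one does not rely on the elements of $S$ being positive.
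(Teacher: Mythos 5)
Your proof is correct and is exactly the intended argument: the paper states this observation without proof (as one of several that ``follow directly from the definition''), and the two-equation subtraction over the common neighborhood $W$ is that direct argument. Your closing caveat is also the one genuine subtlety here: for adjacent vertices of a simple graph the literal hypothesis $N_G(u)=N_G(v)$ on \emph{open} neighborhoods cannot hold (it would force $v\in N_G(v)$), so the condition must be read as $N_G(u)\setminus\{v\}=N_G(v)\setminus\{u\}$, which is precisely how you use it and how the paper later applies the observation to the mutually adjacent vertices $c^5_i,c^6_i,c^7_i$ in Theorem~\ref{the:DeltaAlpha}.
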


\begin{observation} \label{obs:rRegularMagicConstant}
Let $G$ be an $r$-regular $S$-fair graph. Then the $S$-fairness constant is equal to $r \cdot \sum S/ |V(G)|$.
\end{observation}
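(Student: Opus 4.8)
The plan is to use a global double-counting argument on the neighborhood sums, exploiting that $S$-fairness pins down every such sum to the single constant $k$ while $r$-regularity forces every label to be weighted identically. Concretely, I would start by fixing a witnessing bijection $f \in \calM$ and the fairness constant $k$, so that $\sum f(N(v)) = k$ for every $v \in V(G)$.

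First I would sum this identity over all vertices, giving
\[
\sum_{v \in V(G)} \sum_{u \in N(v)} f(u) = k \cdot |V(G)|.
\]
The core step is then to re-index the left-hand side by the vertices being counted rather than by the vertices whose neighborhoods are being summed: each vertex $u$ contributes its label $f(u)$ exactly once for every neighbor it has, i.e.\ exactly $\degr_G(u)$ times. Since $G$ is $r$-regular, $\degr_G(u) = r$ for all $u$, so the left-hand side equals $r \cdot \sum_{u \in V(G)} f(u)$.

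Next I would invoke the fact that $f$ is a \emph{bijection} from $V(G)$ onto the multiset $S$, so $\sum_{u \in V(G)} f(u) = \sum S$. Combining the two computations yields $r \cdot \sum S = k \cdot |V(G)|$, and dividing by $|V(G)|$ gives the claimed value $k = r \cdot \sum S / |V(G)|$.

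There is no genuine obstacle here beyond bookkeeping; the only point requiring care is the swap of summation order, where one must be sure that each label is tallied precisely $\degr_G(u)$ times (once per incident edge), which is exactly where regularity enters to make this a uniform factor of $r$. One could also remark that this re-derives, as a special case, the uniqueness of the fairness constant for regular graphs, since the right-hand side depends only on $r$, $S$, and $|V(G)|$ and not on the choice of $f$.
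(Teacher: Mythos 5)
Your proof is correct, and it is exactly the standard double-counting argument intended here: the paper states this as an observation following directly from the definitions and gives no explicit proof, but summing $\sum f(N(v))=k$ over all vertices and re-indexing so each label is counted $\degr_G(u)=r$ times is precisely the reasoning behind it. No issues.
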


\begin{observation}\label{obs:disjointMagic}
Given two graphs $G_1$ and $G_2$ and a multiset of positive integers $S$, $G_1 + G_2$ is $S$-fair if and only if $G_1$ is $S_1$-fair and $G_2$ is $S_2$-fair with the same fairness constant for some $S_1, S_2 \subseteq S$ such that $S_1 \uplus S_2 = S$. 
\end{observation}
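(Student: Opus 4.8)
The plan is to exploit the defining feature of a disjoint union: every edge of $G_1 + G_2$ lies entirely within $V(G_1)$ or entirely within $V(G_2)$, so for any vertex $v$ we have $N_{G_1+G_2}(v) = N_{G_1}(v)$ when $v \in V(G_1)$ and $N_{G_1+G_2}(v) = N_{G_2}(v)$ when $v \in V(G_2)$. The fairness condition at each vertex therefore depends only on labels assigned within the component containing that vertex, which lets me decompose a single bijection into two independent ones and vice versa. I would prove the two directions separately.

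For the forward direction, I would start with a bijection $f \in \mathcal{M}(G_1 + G_2, S)$ witnessing that $G_1 + G_2$ is $S$-fair with fairness constant $k$. I set $S_1 = f(V(G_1))$ and $S_2 = f(V(G_2))$, viewed as multisets. Since $V(G_1)$ and $V(G_2)$ partition $V(G_1+G_2)$ and $f$ is a bijection, the images satisfy $S_1 \uplus S_2 = S$, and each restriction $f|_{V(G_1)}$, $f|_{V(G_2)}$ is itself a bijection onto $S_1$, $S_2$ respectively. Then, for every $v \in V(G_1)$, the neighborhood identity above gives $\sum f|_{V(G_1)}(N_{G_1}(v)) = \sum f(N_{G_1+G_2}(v)) = k$, so $f|_{V(G_1)}$ witnesses that $G_1$ is $S_1$-fair with constant $k$; the argument for $G_2$ is symmetric, yielding the same constant $k$.

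For the backward direction, I would take witnesses $f_1 \in \mathcal{M}(G_1, S_1)$ and $f_2 \in \mathcal{M}(G_2, S_2)$ sharing a common fairness constant $k$, with $S_1 \uplus S_2 = S$. I glue them into a single map $f : V(G_1+G_2) \to S$ by setting $f(v) = f_1(v)$ for $v \in V(G_1)$ and $f(v) = f_2(v)$ for $v \in V(G_2)$. Because $V(G_1)$, $V(G_2)$ are disjoint, $f_1$, $f_2$ are bijections, and $S_1 \uplus S_2 = S$, the combined map $f$ is a bijection onto $S$. Applying the neighborhood identity once more shows $\sum f(N_{G_1+G_2}(v)) = k$ for every vertex, regardless of which component it lies in, so $f \in \mathcal{M}(G_1+G_2, S)$ and $G_1+G_2$ is $S$-fair.

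This statement is essentially a bookkeeping observation, so I do not expect a genuine obstacle; the only point requiring care is the multiset accounting. I must treat $S_1 = f(V(G_1))$ and $S_2 = f(V(G_2))$ as multisets (respecting multiplicities) rather than as sets, since $S$ is a multiset and the bijection may send distinct vertices to equal labels, and I should make explicit that the disjoint union $\uplus$ at the level of vertex sets is precisely what guarantees $S_1 \uplus S_2 = S$ and the bijectivity of the glued map.
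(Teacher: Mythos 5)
Your proof is correct and is precisely the routine restriction/gluing argument the paper has in mind: the paper states this observation without proof, noting only that it ``follows directly from the definition,'' and your two directions (restricting $f$ to each component, and gluing $f_1,f_2$ using the common constant $k$) together with the multiset bookkeeping are exactly the intended justification. No gaps.
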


\begin{observation}\label{obs:bipartiteMagic}
Given a complete bipartite graph $K_{m,n}(A, B)$ and a multiset of positive integers $S$, $K_{m,n}$ is $S$-fair if and only if there exists a bijection $f: V(K_{m,n}) \rightarrow S$ such that $\sum f(A) = \sum f(B) = \sum S /2$.
\end{observation}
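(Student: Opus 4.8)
The plan is to exploit the very special neighborhood structure of a complete bipartite graph: every vertex $v \in A$ satisfies $N(v) = B$ and every vertex $u \in B$ satisfies $N(u) = A$. Consequently, for any bijection $f \colon V(K_{m,n}) \to S$, the quantity $\sum f(N(v))$ takes only two possible values, namely $\sum f(B)$ whenever $v \in A$ and $\sum f(A)$ whenever $v \in B$. This collapses the (a priori $|V(K_{m,n})|$-fold) fairness condition into a single equation between these two part-sums, which is the observation I want to extract.

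For the forward direction, I would assume $K_{m,n}$ is $S$-fair and fix some $f \in \calM$ with fairness constant $k$. Evaluating the defining condition $\sum f(N(v)) = k$ on any $v \in A$ gives $\sum f(B) = k$, and on any $u \in B$ gives $\sum f(A) = k$; hence $\sum f(A) = \sum f(B)$. Since $f$ is a bijection onto $S$ and $A, B$ partition $V(K_{m,n})$, we have $\sum f(A) + \sum f(B) = \sum S$, so each part-sum equals $\sum S / 2$, as required.

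For the converse, I would take a bijection $f$ with $\sum f(A) = \sum f(B) = \sum S/2$ and verify directly that $f \in \calM$ with constant $k = \sum S/2$: for every $v \in A$ we get $\sum f(N(v)) = \sum f(B) = \sum S/2$, and for every $u \in B$ we get $\sum f(N(u)) = \sum f(A) = \sum S/2$, so the neighbor-sum is this same constant at every vertex of the graph.

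There is essentially no difficult step here; the only points requiring a small amount of care are observing that $A$ and $B$ genuinely partition the vertex set (so that the two part-sums add up to $\sum S$, letting us solve for $\sum S/2$) and that the argument is uniform across all vertices within a part precisely because their neighborhoods coincide. The statement then follows immediately from the definition of $S$-fairness, and in fact this also re-derives the value of the fairness constant as $\sum S/2$.
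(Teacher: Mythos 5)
Your proof is correct and is exactly the argument the paper has in mind: this observation is stated there without proof as one that ``follows directly from the definition,'' and your write-up simply spells out that direct argument (neighborhoods within each part coincide, so the fairness condition collapses to $\sum f(A) = \sum f(B)$, and the two part-sums add to $\sum S$). Nothing further is needed.
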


\paragraph{Integer Linear Programming.} In the {\sc Integer Linear Programming Feasibility} (ILP) problem, the input consists of $p$ variables $x_1, x_2, \ldots, x_p$ and a set of $m$ inequalities of the following form:
	\[\begin{array}{*{9}{@{}c@{}}}
		a_{1,1}x_1 & + & a_{1,2}x_1 &+ & \cdots & + & a_{1,p}x_p & \leq & y_1 \\
		a_{2,1}x_1 & + & a_{2,2}x_2 & + & \cdots & + & a_{2,p}x_p & \leq & y_2 \\
		\vdots    &   & \vdots    &   &        &   & \vdots    &   & \vdots \\
		a_{m,1}x_1 & + & a_{m,2}x_2 & + & \cdots & + & a_{m,p}x_p & \leq & y_m \\
	\end{array}\]
where every coefficient $a_{i_j}$ and $y_i$ is required to be an integer. The task is to check whether there exists an assignment of integer values for every variable $x_i$ so that all inequalities are satisfiable. The following theorem about the tractability of the ILP problem will be useful throughout Section~\ref{sec:fpt}.

\begin{theorem}[\cite{DBLP:journals/mor/Kannan87,DBLP:journals/mor/Lenstra83,DBLP:journals/combinatorica/FrankT87}]\label{the:runningTimeILP}
The ILP problem with $p$ variables is \FPT parameterized by $p$. 
\end{theorem}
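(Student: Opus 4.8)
The plan is to prove this via Lenstra's geometric approach, using the flatness theorem together with lattice basis reduction, and proceeding by induction on the number of variables $p$. First I would recast the feasibility question geometrically: the $m$ inequalities define a rational polyhedron $P = \{x \in \mathbb{R}^p : Ax \leq y\}$, and the goal is to decide whether $P \cap \mathbb{Z}^p \neq \emptyset$. By standard arguments one may assume that $P$ is bounded (any feasible integer point, if one exists, can be taken to have bit-size polynomial in the input, so intersecting $P$ with a sufficiently large box does not change the answer), so we may treat $P$ as a polytope.

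The key tool is Khinchine's flatness theorem: there is a function $w(p)$ depending only on $p$ such that every convex body $K \subseteq \mathbb{R}^p$ either contains a lattice point or has lattice width at most $w(p)$, meaning there is a nonzero $c \in \mathbb{Z}^p$ with $\max_{x \in K} c^\top x - \min_{x \in K} c^\top x \leq w(p)$. The algorithm then proceeds as follows. In the base case $p = 1$, $P$ is an interval and I simply check whether it contains an integer. For the inductive step, I would run the LLL lattice basis reduction algorithm to bring $P$ into a sufficiently well-rounded position and to detect, in polynomial time, either that $P$ contains an integer point or a direction $c$ in which $P$ is thin. In the first case I answer \yes. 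In the second case every integer point of $P$ satisfies $c^\top x \in \{t_0, t_0 + 1, \ldots, t_0 + \lfloor w(p) \rfloor\}$ for a bounded range of integers; for each admissible value $t$ I intersect $P$ with the hyperplane $c^\top x = t$, apply a unimodular change of coordinates to turn it into an integer-feasibility instance in $p-1$ variables, and recurse.

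The recursion branches into at most $w(p)+1$ subproblems at each of the $p$ levels of depth, so the total number of leaf instances is bounded by $\prod_{i=1}^{p}(w(i)+1)$, a function of $p$ alone; since each node performs only polynomial work (running LLL and computing widths in a fixed direction), the overall running time has the form $g(p) \cdot |\text{input}|^{O(1)}$, which is exactly the desired \FPT bound.

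The main obstacle I expect is making the thin-direction step both correct and polynomial-time: computing the exact lattice width is itself hard, so one must settle for an approximation whose quality depends only on $p$, and this is precisely what LLL delivers (it finds a short vector, hence an approximately flattest direction, within a factor bounded in terms of the dimension). A secondary difficulty is controlling the bit-complexity of the coefficients through the recursion, since the unimodular changes of basis can inflate the numbers; handling this carefully, and optimizing the dependence on $p$ as well as on the coefficient sizes, is what the refinements of Kannan and of Frank--Tardos accomplish.
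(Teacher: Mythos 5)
The paper gives no proof of this theorem; it is imported as a black box from the cited works of Lenstra, Kannan, and Frank--Tardos. Your sketch is a faithful outline of exactly the argument in those references (Lenstra's flatness-theorem-plus-LLL recursion on the dimension, with the bit-size and dependence-on-$p$ refinements due to Kannan and Frank--Tardos), so it matches the intended justification and I see no gap at the level of detail appropriate here.
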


\paragraph{Parameterized Complexity.} A problem $\Pi$ is a {\em parameterized} problem if each problem instance of $\Pi$ is associated with a {\em parameter} $k$. For simplicity, we denote a problem instance of a parameterized problem $\Pi$ as a pair $(I,k)$ where the second argument is the parameter $k$ associated with $I$. The main objective of the framework of Parameterized Complexity is to confine the combinatorial explosion in the running time of an algorithm for an \textsf{NP}-hard parameterized problem $\Pi$ to depend only on $k$. In particular, a parameterized problem $\Pi$ is {\em fixed-parameter tractable} (\textsf{FPT}) if any instance $(I, k)$ of $\Pi$ is solvable in time $f(k)\cdot |I|^{\OO(1)}$, where $f$ is an arbitrary computable function of $k$. Moreover, a parameterized problem $\Pi$ is {\em \paraH} if it is \textsf{NP}-hard for some fixed constant value of the parameter $k$. For more information on Parameterized Complexity, we refer the reader to books such as \cite{DBLP:series/txcs/DowneyF13,DBLP:books/sp/CyganFKLMPPS15}.
\section{\textsf{Para-NP}-hardness Results}\label{sec:hardness}

In this section, we exhibit the \textsf{para-NP}-hardness of the \fair\ problem with respect to several structural graph parameters. We start with a \textsf{para-NP}-hardness result with respect to the parameter $\tw + \Delta$.


\begin{theorem}\label{the:DegreeTreewidth}
The \fair\ problem is \NPH for $3$-regular graphs with treewidth $3$. In particular, it is \paraH parameterized by $\tw + \Delta$, even for regular graphs.
\end{theorem}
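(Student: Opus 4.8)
The plan is to reduce from \textsc{$3$-Partition} (Definition~\ref{def:3Partition}), which is strongly \textsf{NP}-hard, to \fair\ on the graph $G = mK_{3,3}$, the disjoint union of $m$ copies of $K_{3,3}$. This graph is the natural target: $K_{3,3}$ is $3$-regular, so $mK_{3,3}$ is $3$-regular as well, and since $\tw(K_{3,3})=3$ (it has a $K_4$-minor for the lower bound, and the decomposition whose bags are $\{a_1,a_2,a_3,b_j\}$, one per vertex $b_j$ of the second side of $K_{3,3}(A,B)$, has width $3$), Observation~\ref{obs:treewidthDisjoint} gives $\tw(mK_{3,3})=3$. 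Thus a polynomial-time reduction producing such instances immediately yields \textsf{NP}-hardness for $3$-regular graphs of treewidth $3$, and hence \paraH for $\tw+\Delta$ even on regular graphs.

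First I would record what $S$-fairness of $mK_{3,3}$ means. In a single copy $K_{3,3}(A,B)$, every vertex of $A$ has neighbourhood exactly $B$ and vice versa, so by Observation~\ref{obs:bipartiteMagic} it is $S'$-fair iff its six labels split into the two triples $A$ and $B$, each of sum $\sum S'/2$. Combining this with Observation~\ref{obs:disjointMagic}, $mK_{3,3}$ is $S$-fair iff the $6m$ elements of $S$ can be partitioned into $2m$ triples all of a common sum $k=\sum S/(2m)$ (the grouping of triples into copies is then arbitrary). So the task becomes: can $S$ be split into $2m$ equal-sum triples?

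Given a \textsc{$3$-Partition} instance $W=\{w_1,\dots,w_{3m}\}$ with target $T=\sum W/m$ (pre-scaled so that $T$ is large, which keeps the instance strongly \textsf{NP}-hard and all numbers polynomially bounded), I would set $k=3T+3$ and build $S=R\uplus D$ with $3m$ \emph{real} elements $R=\{3w_i+1 : i\in[3m]\}$ and $3m$ \emph{dummy} elements $D$, all divisible by $3$, chosen so that $D$ itself partitions into $m$ triples each of sum $k$ (e.g.\ $m$ copies each of $3$, $3$, and $3T-3$). Then any three real elements sum to $3(w_a+w_b+w_c)+3$, which equals $k$ iff $w_a+w_b+w_c=T$, so real triples of sum $k$ correspond precisely to \textsc{$3$-Partition} triples; and $D$ supplies the $m$ extra triples needed to reach a total of $2m$, which conveniently resolves the parity mismatch between the $m$ triples of \textsc{$3$-Partition} and the \emph{even} number $2m$ of triples forced by $K_{3,3}$.

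The step I expect to be the crux is ruling out ``mixed'' triples in the backward direction, and this is handled by a residue argument modulo $3$: every real element is $\equiv 1$ and every dummy is $\equiv 0 \pmod 3$, while $k\equiv 0\pmod 3$; hence a triple of sum $k$ contains either three reals or three dummies, never a mix (two reals and one dummy is $\equiv 2$, one real and two dummies is $\equiv 1$). Consequently, in any partition of $S$ into $2m$ equal-sum triples the $3m$ real elements are forced to form exactly $m$ all-real triples, each of which yields a \textsc{$3$-Partition} triple of sum $T$, giving the backward direction. The forward direction is immediate: a valid $3$-partition of $W$ together with the fixed triple-structure of $D$ fills the $m$ copies of $K_{3,3}$ side by side via Observations~\ref{obs:bipartiteMagic} and~\ref{obs:disjointMagic}. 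Since all constructed numbers are polynomially bounded, the strong \textsf{NP}-hardness of \textsc{$3$-Partition} transfers, completing the proof.
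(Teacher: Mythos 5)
Your reduction is correct, and it targets exactly the same graph family as the paper's proof (disjoint copies of $K_{3,3}$, analyzed via Observations~\ref{obs:disjointMagic} and~\ref{obs:bipartiteMagic}, which reduce $S$-fairness to partitioning $S$ into equal-sum triples); the difference lies entirely in how the two arguments cope with the mismatch between the $m$ triples of {\sc $3$-Partition} and the even number of triples that copies of $K_{3,3}$ demand. The paper splits into cases on the parity of $m$: for even $m$ it uses $(m/2)K_{3,3}$ with $S=W$ directly, and for odd $m$ it adds one extra copy and pads $S$ with the triple $\{sum-2,1,1\}$, first normalizing $W$ so every element exceeds $1$ in order to force that padding triple to stay together in the backward direction. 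You instead always use $mK_{3,3}$ ($2m$ triples), re-encode the reals as $3w_i+1$ and add $3m$ dummies divisible by $3$, and use the residue argument modulo $3$ to forbid mixed triples. Both are sound; yours avoids the case analysis and the ``all elements $>1$'' normalization at the cost of rescaling the numbers and doubling the instance, while the paper's even case has the virtue of being an identity encoding ($S=W$). Your treewidth justification ($K_4$-minor lower bound plus the explicit width-$3$ decomposition) is also fine and is actually more detailed than the paper's bare assertion that $\tw(K_{3,3})=3$.
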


\begin{proof}
We present a reduction from {\sc $3$-Partition}. Given a multiset $W$ of $n = 3m$ positive integers, for some $m \in \mathbb{N}$, we create two instances of \fair\ based on the value of $m$ as follows (see Figure~\ref{fi:theorem1}). \smallskip\\
{\bf Case~1 [When $m$ is a multiple of $2$]:} In this case, we create an instance $(G, S)$ of \fair\ where $G = (m/2)K_{3,3}$ and $S = W$. Note that $G$ is a $3$-regular graph. Since $\tw(K_{3,3}) = 3$, by Observation~\ref{obs:treewidthDisjoint}, $\tw(G) = 3$. Let $V(G) = \biguplus_{i \in [m/2]}V_i$ where $V_i = A_i \cup B_i$ is the vertex set of the $i$-th copy of $K_{3,3}$ with bipartition $A_i$ and $B_i$. We now prove that $W$ is a \yes instance of {\sc $3$-Partition} if and only if $G$ is $S$-fair.

Assume first that $W$ is a \yes instance of {\sc $3$-Partition}. Let $W_1, W_2, \ldots, W_m$ be a corresponding partition of $W$. Then, by Definition~\ref{def:3Partition}, for every $i \in [m], \sum W_i = \sum W/m$. Let $f: V(G) \rightarrow S$ be a bijective function defined as follows. For every $i \in [m/2],$ let $f(A_i) = W_i$ and $f(B_i) = W_{m/2+i}$ (the internal labeling within $A_i$ and $B_i$ is arbitrary). So, for every $i \in [m/2], \sum f(A_i) = \sum f(B_i) = \sum W/m$. Thus, by Observations~\ref{obs:disjointMagic} and~\ref{obs:bipartiteMagic}, $G = (m/2)K_{3,3}$ is $S$-fair.

Conversely, let $G = (m/2)K_{3,3}$ be $S$-fair. Then, by Observations~\ref{obs:disjointMagic} and~\ref{obs:bipartiteMagic}, there exists a bijection $f:V(G) \rightarrow S$ such that for every $i \in [m/2], \sum f(A_i) = \sum f(B_i) = \sum S/m = \sum W/m$. Thus, $\{f(A_1), f(A_2), \ldots, f(A_{m/2}), f(B_1), \ldots, f(B_{m/2})\}$ is a partition of $W$ satisfying the required property, so $W$ is a \yes instance of {\sc $3$-Partition}. \smallskip\\
{\bf Case~2 [When $m$ is not a multiple of $2$]:} Without loss of generality, we can assume that every element in $W$ is greater than $1$ as otherwise we can get an equivalent instance of {\sc $3$-Partition} by adding 1 to all the elements of $W$. Let $sum = \sum W/m$ be the required sum of every subset. As every element in $W$ is greater than $1$, $sum \geq 6$. In this case, we create an instance $(G, S)$ of \fair\ where $G = \big((m+1)/2\big)K_{3,3}$ and $S = W \uplus \{sum-2, 1, 1\}$. Note that $G$ is a $3$-regular graph and $\tw(G) = 3$. Let $V(G) = \biguplus_{i \in [m+1/2]}V_i$ where $V_i = A_i \cup B_i$ is the vertex set of the $i$-th copy of $K_{3,3}$ with bipartition $A_i$ and $B_i$. We now prove that $W$ is a \yes instance of {\sc $3$-Partition} if and only if $G$ is $S$-fair.

Assume first that $W$ is a \yes instance of {\sc $3$-Partition}. Let $W_1, W_2, \ldots, W_m$ be the corresponding partition of $W$. Then, by Definition~\ref{def:3Partition}, for every $i \in [m], \sum W_i = \sum W/m = sum$. Let $W_{m+1} = \{sum-2, 1, 1\}$. Clearly, $\sum W_{m+1} = sum$. As $S = W \biguplus \{sum-2, 1, 1\}$, $S = \biguplus_{i \in [m+1]}W_i$. Let $f: V(G) \rightarrow S$ be a bijective function defined as follows. For every $i \in [(m+1)/2],$ let $f(A_i) = W_i$ and $f(B_i) = W_{(m+1)/2+i}$ (the internal labeling within $A_i$ and $B_i$ is arbitrary). So, for every $i \in [(m+1)/2], \sum f(A_i) = \sum f(B_i) = sum$. Thus, by Observations~\ref{obs:disjointMagic} and~\ref{obs:bipartiteMagic}, $G = \big((m+1)/2\big)K_{3,3}$ is $S$-fair.

Conversely, let $G = (m+1)/2K_{3,3}$ be $S$-fair. Then, by Observations~\ref{obs:disjointMagic} and~\ref{obs:bipartiteMagic}, there exists a bijection $f:V(G) \rightarrow S$ such that for every $i \in [(m+1)/2], \sum f(A_i) = \sum f(B_i) = \sum S/(m+1) = \sum W/m$. Without loss of generality, let $A_s$ be the set containing $sum-2$, for some $s \in [(m+1)/2]$. As $\sum A_s = sum, |A_s| = 3$ and all the elements in $W$ are greater than $1$, necessarily $A_s = \{sum-2, 1, 1\}$. As $S = W \biguplus \{sum-2, 1, 1\}$, we get that $\{f(A_1), \ldots, f(A_{s-1}), f(A_{s+1}), \ldots, f(A_{(m+1)/2}), f(B_1),$ $\ldots, f(B_{(m+1)/2})\}$ is a partition of $W$ satisfying the required property, so $W$ is a \yes instance of {\sc $3$-Partition}.
\end{proof}

\begin{figure}
\centering
\includegraphics[page=1,scale=0.5]{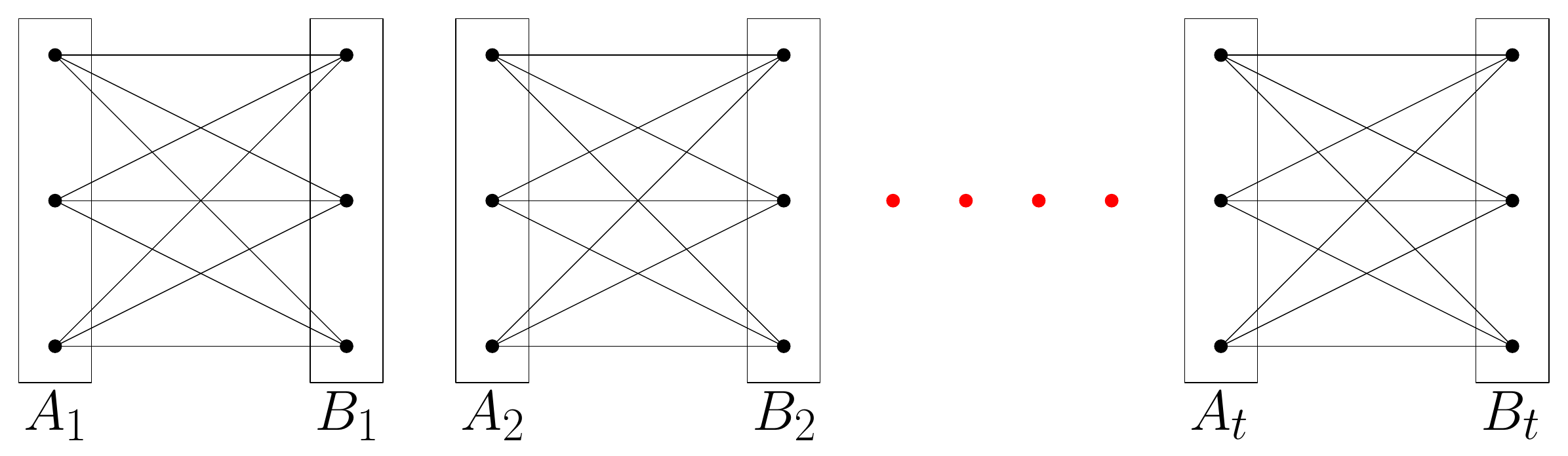}
\caption{Example of the graph $G$ built in the reduction of Theorem~\ref{the:DegreeTreewidth}. $t = m/2$ for Case~1 and $t=(m+1)/2$ for Case~2.}\label{fi:theorem1}
\end{figure}

We now proceed with the \textsf{para-NP}-hardness result with parameter $\fvs + \Delta$.

\begin{theorem}\label{the:FVSDelta}
The \fair\ problem is \NPH for forests with $\Delta = 3$. Since forests have $\fvs = 0$, \fair\ is \paraH parameterized by $\fvs + \Delta$.
\end{theorem}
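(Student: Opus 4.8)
The plan is to reduce from {\sc $3$-Partition}, mirroring the structure of Theorem~\ref{the:DegreeTreewidth} but replacing the gadget $K_{3,3}$ with the star $K_{1,3}$, which is a tree of maximum degree $3$; a disjoint union of such stars is therefore a forest with $\Delta = 3$ and $\fvs = 0$. Given a {\sc $3$-Partition} instance $W$ of $n = 3m$ positive integers with target $sum = \sum W / m$, I would construct $(G,S)$ with $G = mK_{1,3}$ and $S = W \uplus \{\underbrace{sum, \ldots, sum}_{m}\}$, so that $|S| = 4m = |V(G)|$.

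The key structural fact is how a single star $K_{1,3}$ can be $S_i$-fair. Writing $c$ for its center (the degree-$3$ vertex) and $\ell_1, \ell_2, \ell_3$ for its leaves, the center sees $\ell_1 + \ell_2 + \ell_3$ while each leaf sees only the label of $c$; hence being $S_i$-fair with constant $k$ forces the label of $c$ to equal $k$ and the three leaf labels to sum to $k$. Combined with Observation~\ref{obs:disjointMagic}, $G$ is $S$-fair if and only if $S$ splits into $m$ quadruples, each consisting of one element equal to $k$ (placed at a center) and three elements summing to $k$ (placed at the leaves), for a common $k$.

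For the forward direction, from a partition $W = W_1 \uplus \cdots \uplus W_m$ with each $\sum W_i = sum$ I would assign the three elements of $W_i$ to the leaves of the $i$-th star and the value $sum$ to its center, giving an $S$-fair labeling with $k = sum$. For the converse, I first pin down the fairness constant: in each star the four labels sum to $k + k = 2k$, so summing over all $m$ stars yields $\sum S = 2mk$; since $\sum S = \sum W + m \cdot sum = 2m \cdot sum$, the unique fairness constant is $k = sum$. Consequently every center carries the label $sum$, and the leaf labels partition the remaining $3m$ elements into triples each summing to $sum$.

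The one point requiring care — and the main obstacle — is recovering $W$ exactly from the leaf labels, i.e.\ arguing that the $m$ centers consume precisely the $m$ copies of $sum$ added to $S$ and never ``steal'' an element of $W$. This follows because a leaf label can never be as large as $sum$: its two triple-partners are positive and the triple sums to $sum$, so each leaf is strictly smaller than $sum$. Hence every element of value at least $sum$ is forced onto a center (and only a center, of label exactly $sum$, can accommodate it). In a {\sc $3$-Partition} yes-instance all elements of $W$ are strictly below $sum$, so the centers take exactly the $m$ added copies and the leaves reconstitute $W$ as a valid partition; in every other case $W$ is a no-instance and, by the same forcing argument, $G$ admits no $S$-fair labeling, so both sides of the equivalence fail together. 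Since all numbers remain polynomially bounded and {\sc $3$-Partition} is strongly \NPH, this completes the reduction and proves the theorem.
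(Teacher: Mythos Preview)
Your proposal is correct and uses exactly the same reduction as the paper: $G = mK_{1,3}$ with $S = W \uplus \{sum,\ldots,sum\}$ ($m$ copies), forward direction identical, and the converse pinned down by computing $k = sum$ from $\sum S = 2mk$. The one difference is that your ``stealing'' worry is unnecessary: since $S = W \uplus \{sum^m\}$ as \emph{multisets}, once the $m$ centers consume $m$ copies of $sum$ the leaves carry exactly the multiset $W$ regardless of whether $W$ itself contains the value $sum$, which is precisely how the paper dispatches the converse in one line.
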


\begin{proof}
We present a simple reduction from {\sc $3$-Partition}. Given a multiset $W$ of $n = 3m$ positive integers, for some $m \in \mathbb{N}$, let $sum = \sum W/m$ be the required sum of every subset. We create an instances $(G, S)$ of \fair\ where $G = mK_{1,3}$ and $S = W \biguplus \{s_1=sum, s_2=sum, \ldots, s_m=sum\}$.  Note that $G$ is a forest with $\Delta(G) = 3$. Let $V(G) = \biguplus_{i \in [m]}V_i$ where $V_i = \{v_i\} \cup B_i$ is the vertex set of the $i$-th copy of $K_{1,3}$ with $B_i$ being the set of leaves. See Figure~\ref{fi:theorem2}. We now prove that $W$ is a \yes instance of {\sc $3$-Partition} if and only if $G$ is $S$-fair.

Assume first that $W$ is a \yes instance of {\sc $3$-Partition}. Let $W_1, W_2, \ldots, W_m$ be the corresponding partition of $W$. Then, by Definition~\ref{def:3Partition}, for every $i \in [m], \sum W_i = sum$. Let $f: V(G) \rightarrow S$ be a bijective function defined as follows. For every $i \in [m]$, let $f(B_i) = W_i$ and $f(v_i) = sum$ (the internal labeling of $B_i$ is arbitrary). So, for every $i \in [m], f(v_i) = \sum f(B_i) = sum$. Thus, by Observations~\ref{obs:disjointMagic} and~\ref{obs:bipartiteMagic}, $G = mK_{1,3}$ is $S$-fair.

Conversely, let $G = mK_{1,3}$ be $S$-fair. Then, by Observations~\ref{obs:disjointMagic} and~\ref{obs:bipartiteMagic}, there exists a bijection $f:V(G) \rightarrow S$ such that for every $i \in [m], f(v_i) = \sum f(B_i) = \sum S/(m+1) = sum$. As $S = W \biguplus \{s_1=sum, s_2=sum, \ldots, s_m=sum\}$, we get that $\{f(B_1),$ $\ldots, f(B_m)\}$ is a partition of $W$ satisfying the required property, so $W$ is a \yes instance of {\sc $3$-Partition}.
\end{proof}

\begin{figure}
\centering
\includegraphics[page=2,scale=0.5]{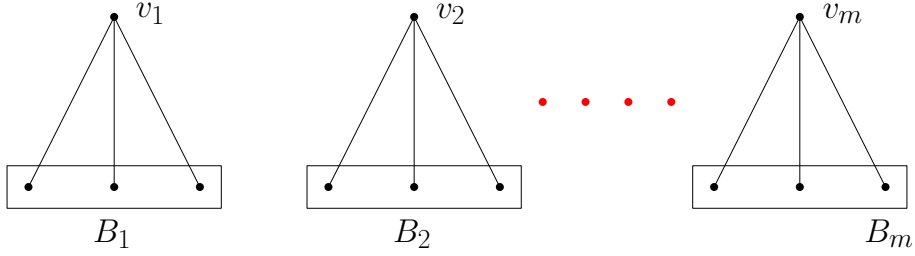}
\caption{Example of the graph $G$ built in the reduction of Theorem~\ref{the:FVSDelta}.}\label{fi:theorem2}
\end{figure}

Finally, we give the \textsf{para-NP}-hardness result with parameter $\alpha + \Delta$. (Recall that $\alpha$ is the number of distinct integers in the input multiset.)

\begin{theorem}\label{the:DeltaAlpha}
The \fair\ problem in \NPH for $6$-regular graphs with $3$ distinct labels. In particular, it is \paraH parameterized by $\alpha + \Delta$, even for regular graphs.
\end{theorem}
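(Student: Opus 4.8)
The plan is to reduce from \sat, exploiting two features of that problem established earlier. By Lemma~\ref{lem:xsat} an instance has equally many variables and clauses ($m=n$ with $3 \mid n$), and in any exactly-one-in-three satisfying assignment precisely $n/3$ of the variables are true. This rigidity is exactly what makes a bounded number of labels possible: I will encode ``true'' and ``false'' by two fixed label values and reserve a third value for the clause/scaffolding vertices, so that the multiset $S$ can be fixed \emph{in advance} to contain exactly $n/3$ true-labels and $2n/3$ false-labels, matching every satisfying assignment simultaneously.

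The skeleton of the construction is the variable--clause incidence graph $I$: a vertex $u_i$ for each variable, a vertex $w_j$ for each clause, and an edge $u_i w_j$ whenever $v_i$ occurs in $c_j$. Since every variable occurs in three clauses and every clause contains three variables, $I$ is $3$-regular and bipartite. I give each variable vertex a label in $\{t,f\}$ (true/false) and each clause vertex the third label $p$, chosen so that $3p = t + 2f$ with $t > f > 0$ (illustratively $f=1,t=4,p=2$). With this choice the neighbor-sum of a clause vertex equals the common target $3p$ \emph{exactly} when one of its three variable-neighbors is true, whereas $0$, $2$, or $3$ true neighbors yield strictly different sums; meanwhile a variable vertex has neighbor-sum $3p$ automatically, since all three of its neighbors are clause vertices labeled $p$. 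Observation~\ref{obs:rRegularMagicConstant} then pins the (unique) fairness constant to $3p$, matching the intended value. This $3$-regular picture conveys the encoding, but it is too loose; the actual graph will be degree $6$, obtained by attaching to every vertex a small gadget whose sole purpose is to force the labeling to respect the intended types.

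For the equivalence, the forward direction is immediate: a satisfying assignment has exactly $n/3$ true variables, so labeling $u_i$ by $t$ or $f$ according to the assignment (and the scaffold as intended) is a bijection onto $S$, and the computation above shows every neighbor-sum equals the fairness constant. The converse is where the work lies: from a fair labeling I must recover a satisfying assignment, and this requires knowing that the scaffold value sits on the clause/gadget vertices and $\{t,f\}$ only on the variable vertices. Once this \emph{segregation} is known, the clause-vertex constraint forces exactly one true literal per clause, i.e.\ a satisfying assignment, and the argument closes.

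The main obstacle is precisely this segregation, and it is also what forces the jump from degree $3$ to degree $6$. In the earlier non-regular reductions (and in the semi-magic-square observation) label classes were pinned down by comparing \emph{degrees}; but in a regular graph every vertex has the same degree, so that lever is unavailable. I would instead enforce segregation by a magnitude argument: rescale the three labels so that the scaffold value is far larger than the other two (a ``large base''), so that the fairness constant decodes uniquely into the number of large-label neighbors each vertex is permitted, turning fairness into a rigid ``every vertex has the same number of scaffold-neighbors'' condition. The degree-$6$ gadgets then have to be designed so that this uniform condition is (i) satisfiable at every gadget vertex, (ii) forces the large label exactly onto the intended scaffold set and nowhere else, and (iii) leaves the two small labels free to realize any exactly-one-in-three assignment but no ``cheating'' labeling. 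Building gadgets that achieve (i)--(iii) \emph{simultaneously} while keeping the graph $6$-regular is the step I expect to be hardest, and it is the reason the construction cannot simply be the bare incidence graph.
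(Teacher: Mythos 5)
Your setup matches the paper's: a reduction from \sat, the variable--clause incidence skeleton, labels $1,2,4$ for false/scaffold/true with fairness constant $12$, and the use of Lemma~\ref{lem:xsat} to fix the multiplicities of $S$ in advance. The forward direction as you describe it is essentially the paper's. But the converse --- which you correctly identify as where the work lies --- is left open: you name the segregation property as the hard step and defer the gadget construction that would enforce it. Since that construction \emph{is} the content of the theorem, this is a genuine gap, not a deferred routine verification.

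Moreover, the mechanism you propose for closing the gap (make the scaffold label $p$ much larger than $t,f$ so that the fairness constant decodes into a uniform count of scaffold-neighbors at every vertex) runs into a concrete obstruction. In the intended labeling a variable vertex has all six of its neighbors in clause gadgets, hence six scaffold-neighbors, while a clause vertex adjacent to three variable vertices has only three; so the ``same number of scaffold-neighbors at every vertex'' condition is violated by the very labeling you want to certify. Forcing uniformity would require variable vertices to have small-label (i.e.\ variable-vertex) neighbors as well, which changes the skeleton and introduces spurious constraints between variables that have nothing to do with the formula. The paper avoids magnitudes entirely and achieves segregation structurally: each clause gadget consists of $15$ vertices built from triangles and $K_{3,3}$'s whose twin vertices are forced to carry equal labels (Observation~\ref{obs:sameLabel}), and a short cascade of neighbor-sum equations --- starting at the hub $c^{15}_i$, which sees two triples of twins and hence must satisfy $3f(c^5_i)+3f(c^{12}_i)=12$, forcing both to be $2$, and propagating outward to $c^2_i$, $c^9_i$, $c^1_i$, $c^8_i$ --- pins label $2$ onto every gadget vertex. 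Only then does the constraint at $c^1_i$ collapse to $f(x_j)+f(x_k)+f(x_l)=6$ over the leftover labels $\{1,4\}$, whose unique solution $\{4,1,1\}$ is exactly the one-in-three condition. Without either this gadget or a worked-out alternative that actually forces segregation, the reduction is not established.
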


\begin{proof}
We present a reduction from \sat. Given a \sat formula $\rho$ with $n$ variables and $n$ clauses, we create an instance $(G,S)$ of \fair\ as follows. Suppose that the variables are indexed by $1, 2, \ldots, n$ and so do the clauses. For every $i \in [n]$, the variable gadget in $G$ consists of a single vertex $x_i$ called a variable vertex. Let $A$ be the set of all the variable vertices. For every $i \in [n]$, the clause gadget in $G$ consists of $15$ vertices $c^1_i, c^2_i, \ldots, c^{15}_i$. For every $i \in [n]$, we add the following edges between these $15$ vertices in the clause gadget in $G$ (see Figure~\ref{fi:theorem3}):
\begin{itemize}
	\item $\forall j \in [3], \{c^1_i, c^{1+j}_i\}$. [Edge set of $K_{1,3}(\{c^1_i\}, \{c^2_i, c^3_i, c^4_i\})$].
	\item $\forall j \in [3], \{c^8_i, c^{8+j}_i\}$. [Edge set of $K_{1,3}(\{c^8_i\}, \{c^9_i, c^{10}_i, c^{11}_i\})$].
	\item $\{c^2_i, c^3_i\}, \{c^3_i, c^4_i\}, \{c^4_i, c^2_i\}$. [Edge set of complete graph on $\{c^2_i, c^3_i, c^4_i\}$].
	\item $\{c^9_i, c^{10}_i\}, \{c^{10}_i, c^{11}_i\}, \{c^{11}_i, c^9_i\}$. [Edge set of complete graph on $\{c^9_i, c^{10}_i, c^{11}_i\}$].
	\item $\forall j \in \{2,3,4\}, \forall k \in \{5,6,7\}, \{c^j_i, c^k_i\}$. [Edge set of $K_{3,3}(\{c^2_i, c^3_i, c^4_i\}, \{c^5_i, c^6_i, c^7_i\})$].
	\item $\forall j \in \{9,10,11\}, \forall k \in \{12,13,14\}, \{c^j_i, c^k_i\}$. [Edge set of $K_{3,3}(\{c^9_i, c^{10}_i, c^{11}_i\}, \{c^{12}_i, c^{13}_i, c^{14}_i\})$].
	\item $\forall j \in \{5,6,7, 12, 13, 14\}, \{c^{15}_i, c^j_i\}$. [Edge set of $K_{1,6}(\{c^{15}_i\}, \{c^5_i, c^6_i, c^7_i, c^{12}_i, c^{13}_i, c^{14}_i\})$].
\end{itemize}
We now explain how we connect the variable and the clause gadgets. For every variable vertex $x_i$, let $j, k$ and $l$ be the indices of the clauses where the $i$-th variable appears. Then, we add the $6$ edges $\{x_i, c^1_j\}, \{x_i, c^8_j\}, \{x_i, c^1_k\}, \{x_i, c^8_k\}, \{x_i, c^1_l\}$ and $\{x_i, c^8_l\}$ to $G$. This completes the construction of $G$. Note that $|V(G)| = n + 15n = 16n$. We now define $S$ as the multiset containing $3$ distinct labels $1, 2$ and $4$ with $\alpha_S(1) = 2n/3, \alpha_S(2) = 15n$ and $\alpha_S(4) = n/3$. From the above construction, it is easy to see that $G$ is a $6$-regular graph and $S$ contains $3$ distinct labels. By Observation~\ref{obs:rRegularMagicConstant}, the $S$-fairness constant $k = 12$. 

In what follows, we will set a variable to true if and only if the label of the corresponding variable vertex is $4$ and false otherwise. We now prove that $\rho$ is a satisfiable if and only if $G$ is $S$-fair.

Assume first that $\rho$ is satisfiable. Let $A'$ be the subset of variable vertices for which the corresponding variables are true. From Lemma~\ref{lem:xsat}, $|A'| = n/3$. Let $f: V(G) \rightarrow S$ be a bijective function defined as follows:
\begin{inparaenum}[(i)]
	\item for all $v \in A',  f(v) = 4$;
	\item for all $v \in A\setminus A', f(v) = 1$;
	\item for all $v \in V(G)\setminus A, f(v) = 2$.
\end{inparaenum} 
For every $i \in [n]$, let $B$ be the set containing vertices the $c^1_i$ and $c^8_i$. From the construction of $G$, only vertices from $B$ have neighbors in $A$, so for all $v \in V(G)\setminus B, f(N_G(v)) = \{2,2,2,2,2,2\}$. As every clause has exactly one true variable and two false variables, for every vertex $v \in B, f(N_G(v)) = \{4, 1, 1, 2, 2, 2\}$. So, for all $v \in V(G), \sum f(N_G(v)) = 12$. Hence, $G$ is $S$-fair.

Conversely, let $G$ be $S$-fair. Then, there exists a bijection $f:V(G) \rightarrow S$ such that for all $v \in V(G), \sum f(N_G(v)) = 12$. Note that, for every $i \in [n], N_G(c^5_i) = N_G(c^6_i) = N_G(c^7_i)$ and $\{c^5_i, c^6_i\}, \{c^6_i, c^7_i\}, \{c^7_i, c^5_i\} \in E(G)$, so by Observation~\ref{obs:sameLabel}, $f(c^5_i) = f(c^6_i) = f(c^7_i)$. Similarly, $f(c^{12}_i) = f(c^{13}_i) = f(c^{14}_i)$, $f(c^2_i) = f(c^3_i) = f(c^4_i)$ and $f(c^9_i) = f(c^{10}_i) = f(c^{11}_i)$. 

Consider the vertex $c^{15}_i$, for any $i \in [n]$. As $\sum f(N_G(c^{15}_i)) = 12$ and $N_G(c^{15}_i) = \{c^5_i, c^6_i, c^7_i, c^{12}_i,$ $c^{13}_i, c^{14}_i\}$, we have that $3f(c^5_i) + 3f(c^{12}_i) = 12$ (by the equalities above). Therefore, $f(c^5_i) + f(c^{12}_i) = 4$ which necessarily implies that $f(c^5_i) = f(c^{12}) = 2$. Now, consider the vertex $c^5_i$. As $\sum f(N_G(c^5_i)) = 12$ and $N_G(c^5_i) = \{c^6_i, c^7_i, c^2_i, c^3_i, c^4_i, c^{15}_i\}$, we have that $3f(c^2_i) + f(c^{15}_i) = 8$ (by equalities above). Therefore, necessarily $f(c^2_i) = f(c^{15}_i) = 2$. Similarly, $f(c^9_i) = 2$. Now, consider the vertex $c^2_i$. As $\sum f(N_G(c^2_i)) = 12$ and $N_G(c^2_i) = \{c^1_i, c^3_i, c^4_i, c^5_i, c^6_i, c^7_i\}$, necessarily $f(c^1_i) = 2$. Similarly, $f(c^8_i) = 2$. 

So far, we conclude that all and only the occurrences of integer $2$ in $S$ are used to label the vertices of the clause gadgets. Finally, consider the vertex $c^1_i$. Let $c^1_i$ is adjacent to variable vertices $x_j, x_k$ and $x_l$. As $\sum f(N_G(c^1_i)) = 12$ and $N_G(c^1_i) = \{c^2_i, c^3_i, c^4_i, x_j, x_k, x_l\}$, we get that $f(x_j) + f(x_k) + f(x_l) = 6$. The only solution to this equation is $\{4,1,1\}$ as the remaining labels are $4$ and $1$. Without loss of generality, let $f(x_j) = 4, f(x_k) = f(x_l) = 1$. Recall that, we set a variable to true if and only if the label of the corresponding variable vertex is $4$ and false otherwise, so we assign variable corresponding to $x_j$ as true and variables corresponding to $x_k$ and $x_l$ as false. It is easy to see that a clause has exactly one true variable. As the number of times $4$ appear in $S$ is $n/3$ and every variable appears exactly in $3$ clauses, the number of satisfied clauses is $n$. Hence $\rho$ is satisfiable. 
\end{proof}

\begin{figure}
\centering
\includegraphics[page=3,scale=0.5]{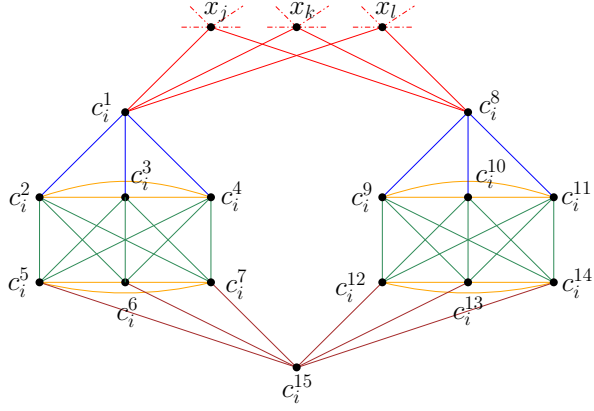}
\caption{Example of the clause gadget in $G$ for a clause $c_i = x_j \vee x_k \vee x_l$ built in the reduction of Theorem~\ref{the:DeltaAlpha}.}\label{fi:theorem3}
\end{figure}
\section{FPT Algorithms}\label{sec:fpt}

In this section, we develop \FPT algorithms for the \fair\ problem with respect to several structural graph parameters. We begin by giving a observation for a graph $G$ to be $S$-fair when $G$ contains isolated vertices.

\begin{observation}
Let $G$ be a graph with $\delta(G) = 0$ and $S$ be a multiset of positive integers. Then, $G$ is $S$-fair if and only if for every vertex $v \in V(G), \deg_G(v) = 0$ (i.e., $G$ contains only isolated vertices).
\end{observation}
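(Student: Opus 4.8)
The plan is to prove both directions of the biconditional directly, relying only on two facts: the empty-sum convention (the sum over an empty neighborhood is $0$) and the positivity of the labels in $S$.

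First I would dispatch the easy direction, namely that if every vertex of $G$ has degree $0$ then $G$ is $S$-fair. In this case $N_G(v) = \emptyset$ for every $v \in V(G)$, so any bijection $f : V(G) \to S$ (which exists since $|S| = |V(G)|$ by Definition~\ref{def:SMagic}) satisfies $\sum f(N_G(v)) = 0$ for all $v$. Hence $G$ is $S$-fair with fairness constant $k = 0$.

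For the converse, suppose $G$ is $S$-fair and fix some $f \in \calM$. The hypothesis $\delta(G) = 0$ guarantees the existence of an isolated vertex $u$, for which $\sum f(N_G(u)) = 0$. Since all vertices share the same fairness constant $k$, this forces $k = 0$. Now I would invoke the positivity of $S$: for any vertex $v$ with at least one neighbor, $\sum f(N_G(v))$ is a sum of strictly positive labels and is therefore strictly greater than $0 = k$, a contradiction. Hence every vertex has an empty neighborhood, i.e. $\deg_G(v) = 0$ for all $v \in V(G)$.

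The argument is elementary and I do not anticipate a genuine obstacle; the only point requiring care is the convention on the empty sum together with the status of $0$ as an admissible fairness constant. As long as the empty sum is taken to be $0$ and $0$ is allowed as the value of $k$ (equivalently, $0 \in \mathbb{N}$ in the statement ``$k \in \mathbb{N}$''), both directions go through verbatim; the positivity of the labels in $S$ is exactly what rules out any vertex of positive degree once $k = 0$ has been established from the isolated vertex.
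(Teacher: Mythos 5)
Your proof is correct; the paper states this observation without proof, and your two-direction argument (an isolated vertex forces the fairness constant to be $0$, and positivity of the labels then rules out any vertex of positive degree) is exactly the intended elementary reasoning, including the one genuine subtlety you flag about the empty sum being $0$ and $k=0$ being an admissible fairness constant.
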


Due to the above observation, in the rest of this section, we assume that $G$ does not contain any isolated vertices. We now give conditions that a graph $G$ must satisfy to be $S$-fair when $G$ is a cycle or $\delta(G) = 1$.

\begin{lemma}\label{lem:degOneMagic}
Let $G$ be a connected graph with $\delta(G) = 1$ and $S$ be a multiset of positive integers. Then, $G$ is $S$-fair only if $G$ is a star.
\end{lemma}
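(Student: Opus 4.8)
The plan is to exploit the positivity of the labels together with the forced label on the neighbor of a degree-one vertex. Fix a bijection $f \in \calM$ and let $k$ be the (unique) $S$-fairness constant. Since $\delta(G) = 1$, there is a vertex $u$ with $\deg_G(u) = 1$; let $v$ be its unique neighbor, so that $N_G(u) = \{v\}$. The defining condition at $u$ gives $\sum f(N_G(u)) = f(v) = k$. Thus the very first step pins down the label of $v$: necessarily $f(v) = k$.

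The crucial step is a propagation argument showing that every neighbor of $v$ must be a leaf. Take any $w \in N_G(v)$. Then $v \in N_G(w)$, so $f(v) = k$ is one of the summands of $\sum f(N_G(w))$, which equals $k$ as well. Because every element of $S$ is a positive integer, all remaining summands of $\sum f(N_G(w))$ are strictly positive; hence the equality $\sum f(N_G(w)) = k = f(v)$ forces $N_G(w) = \{v\}$, i.e.\ $\deg_G(w) = 1$. Consequently \emph{all} neighbors of $v$ have degree one.

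It then remains to conclude, using connectivity, that $G$ is a star centered at $v$. Since every neighbor of $v$ is a leaf, any vertex $z \neq v$ that were not adjacent to $v$ would have to be joined to $v$ by a path leaving $v$ through one of its neighbors; but those neighbors have degree one and cannot lie on a nontrivial path, a contradiction. Hence every vertex other than $v$ is adjacent to $v$ and is a leaf, so $G = K_{1,\deg_G(v)}$ is a star (the degenerate case $\deg_G(v)=1$ giving $G = K_{1,1}$). I do not expect a genuine obstacle here; the only point requiring care is the positivity argument in the propagation step, where one must invoke that all labels in $S$ are strictly positive to rule out additional neighbors of $w$ and of $v$.
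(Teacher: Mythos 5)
Your proof is correct and follows essentially the same route as the paper's: a degree-one vertex forces its neighbor to carry the label $k$, and then positivity of the labels forces every other neighbor of that vertex to be a leaf (the paper phrases this as a contradiction with a chosen neighbor $w$ of degree at least $2$, for which $\sum f(N_G(w)\setminus\{u\})=0$ is impossible). The only difference is cosmetic: you argue directly that all neighbors of the center are leaves and then invoke connectivity, whereas the paper assumes $G$ is not a star and derives the contradiction.
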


\begin{proof}
Assume that $G$ is $S$-fair. Let $f$ be a function in $\calM$ and $k$ be the $S$-fairness constant. Let $v \in V(G)$ be a vertex of degree $1$, and let $u$ be the neighbor of $v$. As $\sum f(N_G(v)) = k$, we get that $f(u) = k$. Assume for contradiction that $G$ is not a star. Then, $|V(G)| \geq 4$, otherwise $G$ is a star. Let $w$ be a neighbor of $u$ other than $v$ such that $\degr_G(w) \geq 2$. (If no such $w$ exists, then $G$ is a star.) As $\sum f(N_G(w)) = k$ and $f(u) = k$, we get that $\sum f(N_G(w) \setminus \{u\}) = 0$. Since $N_G(w) \setminus \{u\} \neq \emptyset$ and labels are positive, this is a contradiction.
\end{proof}

\begin{lemma}\label{lem:cycleDistanceMagic}
Let $G$ be a cycle graph on $n$ vertices and $S$ be a multiset of positive integers. Let $k$ be the required $S$-fairness constant. Then:

\begin{itemize}
	\item If $n \bmod 4 = 0$, then $G$ is $S$-fair if and only if $S$ contains $4$ labels $a, b, k-a, k-b$ with $\alpha_S(a) = \alpha_S(b) = \alpha_S(k-a) = \alpha_S(k-b) = n/4$, for some $a,b \in \mathbb{N}$ such that $a,b < k$.
	\item If $n \bmod 4 \neq 0$, then $G$ is $S$-fair if and only if $S$ contains only one label, $k/2$, with $\alpha_S(k/2) = n$.
\end{itemize}
\end{lemma}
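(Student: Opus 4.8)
The plan is to parametrize a cyclic labeling and extract a simple two-step recurrence from the fairness condition. Write the cycle as $v_0, v_1, \ldots, v_{n-1}$ in cyclic order and set $a_i = f(v_i)$, with indices taken modulo $n$. Since every vertex of $C_n$ has degree exactly $2$, with $N_G(v_i) = \{v_{i-1}, v_{i+1}\}$, the requirement $\sum f(N_G(v_i)) = k$ becomes $a_{i-1} + a_{i+1} = k$ for all $i$. Reading this equation at $v_{i+1}$ gives the clean recurrence $a_{i+2} = k - a_i$, valid for every $i$ modulo $n$. Consequently, stepping twice around the cycle complements a label (with respect to $k$), and stepping four times returns it; so the even-indexed positions alternate between two values $a := a_0$ and $k-a$, while the odd-indexed positions alternate between $b := a_1$ and $k-b$. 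Crucially, every fairness equation relates two indices of the same parity, so the even class and the odd class are constrained independently.

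Next I would impose the cyclic closure, which is exactly where $n \bmod 4$ enters. The map $i \mapsto i+2$ on the index set splits the indices into orbits whose lengths depend on $n$. If $n$ is odd, then $\gcd(2,n)=1$ and there is a single orbit of odd length $n$; traversing it forces a label to equal its own complement, i.e.\ $a = k-a$, so every label equals $k/2$. If $n \equiv 2 \pmod 4$, the two parity classes are separate orbits, each of odd length $n/2$, and the same closure argument again forces $a = b = k/2$, so all $n$ labels equal $k/2$. These two subcases together establish the second bullet. If $n \equiv 0 \pmod 4$, each parity class is an orbit of even length $n/2$, so the alternation $a, k-a, a, k-a, \ldots$ closes up consistently and yields exactly $n/4$ copies of each of $a$ and $k-a$ among the even positions, and likewise $n/4$ copies of each of $b$ and $k-b$ among the odd positions; collecting these multiplicities gives the characterization of $S$ in the first bullet.

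For the converse (sufficiency) in each case I would simply exhibit the labeling. When $n \bmod 4 \neq 0$ and $S$ consists of $n$ copies of $k/2$, the constant labeling $f \equiv k/2$ trivially makes every neighbor-sum $k/2 + k/2 = k$. When $n \equiv 0 \pmod 4$ and $S$ has the stated four-value structure, I place the values around the cycle in the repeating block $a, b, k-a, k-b$; since $n$ is a multiple of $4$ this uses each value exactly $n/4$ times, and one checks directly that $a_i + a_{i+2} = k$ for every $i$, so the resulting $f$ lies in $\calM$. The fairness constant is unique by the result of O'Neal and Slater, so there is no ambiguity in the value $k$. The main obstacle is the bookkeeping in the cyclic closure step: one must track the orbit structure of $i \mapsto i+2$ carefully across the three residues of $n$ modulo $4$, and be attentive to degenerate overlaps among $a, b, k-a, k-b$ (for instance $a = k/2$, or $a = b$), where the separately counted multiplicities coincide and must be combined.
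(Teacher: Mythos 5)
Your proposal is correct and follows essentially the same route as the paper's proof: both derive the two-step relation $f(v_i)+f(v_{i+2})=k$ and chase it around the cycle, with the case split on $n \bmod 4$ determined by whether the resulting chains close up with even or odd length, and both exhibit the same explicit labelings for sufficiency. Your orbit-of-$i\mapsto i+2$ phrasing is a slightly tidier way to organize the paper's explicit residue bookkeeping, and your remark about degenerate overlaps (e.g.\ $a=b$ or $a=k/2$) flags a genuine edge case in the multiplicity count that the paper's proof also glosses over, but the substance is the same.
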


\begin{proof}
Assume first that $G$ is $S$-fair. Denote $V(G) = \{v_1, v_2, \ldots, v_n\}$ in the cyclic order, and let $f \in \calM$. Then, for every $v \in V(G), \sum f(N(v)) = k$. As $G$ is a cycle, for every $i \in [n], N(v_i) = \{v_{(i-1)\bmod n}, v_{(i+1)\bmod n}\}$. Thus, we have that for every $i \in [n], f(v_{(i-1)\bmod n}) + f(v_{(i+1)\bmod n}) = k$. If we expand these equations, we get the following:

\begin{equation}
f(v_1) + f(v_3) = f(v_3) + f(v_5), \ldots, f(v_{n-3}) + f(v_{n-1}) = f(v_{n-1}) + f(v_1).
\end{equation}
\begin{equation}
f(v_2) + f(v_4) = f(v_4) + f(v_6), \ldots, f(v_{n-2}) + f(v_n) = f(v_n) + f(v_2).
\end{equation}
\begin{equation}
f(v_3) + f(v_5) = f(v_5) + f(v_7), \ldots, f(v_{n-1}) + f(v_1) = f(v_1) + f(v_3).
\end{equation}
\begin{equation}
f(v_4) + f(v_6) = f(v_6) + f(v_8), \ldots, f(v_n) + f(v_2) = f(v_2) + f(v_4).
\end{equation}

From these equations, we get the following relations:
\begin{equation}\label{equ:1}
f(v_1) = f(v_5) = f(v_9) = \ldots = f(v_{n-3})
\end{equation}
\begin{equation}\label{equ:2}
f(v_2) = f(v_6) = f(v_{10}) = \ldots = f(v_{n-2})
\end{equation}
\begin{equation}\label{equ:3}
f(v_3) = f(v_7) = f(v_{11}) = \ldots = f(v_{n-1})
\end{equation}
\begin{equation}\label{equ:4}
f(v_4) = f(v_8) = f(v_{12}) = \ldots = f(v_n)
\end{equation}
Notice that Equations~\ref{equ:1},~\ref{equ:2},~\ref{equ:3} and~\ref{equ:4} contain all the vertices $v_i$, $i \in [n]$, such that $i \bmod 4 = 1$, $i \bmod 4 = 2$ , $i \bmod 4 =3$ and $i \bmod 4 = 0$, respectively. Now, we consider the following cases: 
\begin{itemize}
	\item If $n \bmod 4 = 1$, then $(n-3) \bmod 4 = 2, (n-2) \bmod 4 = 3, (n-1) \bmod 4 = 0$ and $n \bmod 4 = 1$. By the observation that Equation~\ref{equ:2} contains all the vertices $v_i$, $i \in [n]$, such that $i \bmod 4 = 2$, $f(v_{n-3}) = f(v_2)$. Similarly, $f(v_{n-2}) = f(v_3), f(v_{n-1}) = f(v_4)$ and $f(v_n) = f(v_1)$. So, by Equations~\ref{equ:1},~\ref{equ:2},~\ref{equ:3} and~\ref{equ:4}, we get that all the vertices have the same label. Thus, $S$ contains only one label, $k/2$, with $\alpha_S(k/2) = n$. 
	\item If $n \bmod 4 = 2$, then $(n-3) \bmod 4 = 3, (n-2) \bmod 4 = 0, (n-1) \bmod 4 = 1$ and $n \bmod 4 = 2$. By the observation that Equation~\ref{equ:2} contains all the vertices $v_i$, $i \in [n]$, such that $i \bmod 4 = 2$, $f(v_n) = f(v_2)$. Similarly, $f(v_{n-3}) = f(v_3)$. So, by Equations~\ref{equ:1},~\ref{equ:2},~\ref{equ:3} and~\ref{equ:4}, we get that $f(v_1) = f(v_3)$ and $f(v_2) = f(v_4)$. Denote $f(v_1) = a$ and $f(v_2) = b$. As $f(v_1) + f(v_3) = k$ and $f(v_2) + f(v_4) = k$, we get that $a = b = k/2$. Thus, $S$ contains only one label, $k/2$, with $\alpha_S(k/2) = n$.
	\item If $n \bmod 4 = 3$, then $(n-3) \bmod 4 = 0, (n-2) \bmod 4 = 1, (n-1) \bmod 4 = 2$ and $n \bmod 4 = 3$. By the observation that Equation~\ref{equ:2} contains all the vertices $v_i$, $i \in [n]$, such that $i \bmod 4 = 2$, $f(v_{n-3}) = f(v_4)$. Similarly, $f(v_{n-2}) = f(v_1), f(v_{n-1}) = f(v_2)$ and $f(v_n) = f(v_3)$. So, by Equations~\ref{equ:1},~\ref{equ:2},~\ref{equ:3} and~\ref{equ:4}, we get that all the vertices have the same label. Thus, $S$ contains only one label, $k/2$, with $\alpha_S(k/2) = n$.
	\item If $n \bmod 4 = 0$, then $(n-3) \bmod 4 = 1, (n-2) \bmod 4 = 2, (n-1) \bmod 4 = 3$ and $n \bmod 4 = 0$, so we do not get any new relation. Denote $f(v_1) = a$ and $f(v_2) = b$. As $f(v_1) + f(v_3) = k$ and $f(v_2) + f(v_4) = k$, we get $f(v_3) = k-a$ and $f(v_4) = k-b$. By Equations~\ref{equ:1},~\ref{equ:2},~\ref{equ:3} and~\ref{equ:4}, $S$ contains $4$ labels $a, b, k-a, k-b$ with $\alpha_S(a) = \alpha_S(b) = \alpha_S(k-a) = \alpha_S(k-b) = n/4$.
\end{itemize}
Conversely, first assume that $n \bmod 4 = 0$ and $S$ contains $4$ labels $a, b, k-a, k-b$ with $\alpha_S(a) = \alpha_S(b) = \alpha_S(k-a) = \alpha_S(k-b) = n/4$, for any $a,b \in \mathbb{N}$ such that $a,b < k$. Denote $V(G) = \{v_1, v_2, \ldots, v_n\}$ in the cyclic order. Let $f: V(G) \rightarrow S$ be a bijective function defined as follows. For every $i \in [n/4],$ let $ f(v_{4i - 3}) = a, f(v_{4i - 2}) = b, f(v_{4i - 1}) = k-a$ and $f(v_{4i}) = k-b$. It is easy to see that $f \in \calM$, so $G$ is $S$-fair. 

Finally, assume that $n \bmod 4 \neq 0$ and $S$ contains only one label, $k/2$, with $\alpha_S(k/2) = n$. Let $f: V(G) \rightarrow S$ be a bijective function defined as follows. For every $v \in V(G), f(v) = k/2$. It is easy to see that $f \in \calM$, so $G$ is $S$-fair.
\end{proof}

We first prove that the \fair\ problem is \FPT parameterized by $\fvs + \alpha + \Delta$. The following two lemmas will be helpful in proving it.


\begin{lemma}\label{lem:extendToForest}
There exists an $\OO(|V(G)|)$-time algorithm that, given 
\begin{inparaenum}[(i)]
	\item a graph $G$,
	\item a multiset of positive integers $S$, 
	\item an induced subgraph $F$ of $G$ such that its a forest, and
	\item a bijection $f'$ from $N_G(F) \cup \leaf(F)$ to a subset $S'$ of $S$,
	\end{inparaenum} 
returns another bijection $f''$ from $N_G(F) \cup V(F)$ to a set $S''$\footnote{$S''$ may not be a subset of $S$.} such that $S' \subseteq S''$ and $f' = f''|_{V'}$. Moreover, if $G$ is $S$-fair and $f' = f|_{N_G(F) \cup \leaf(F)}$ for some $f \in \calM$, then $f'' = f|_{N_G(F) \cup V(F)}$.
%
\end{lemma}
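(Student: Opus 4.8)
The plan is to determine the labels of the internal (non-leaf) vertices of $F$ by propagating the fairness equations from the boundary inward, one vertex at a time. The key structural fact is that every vertex $v\in V(F)$ has $N_G(v)\subseteq V(F)\cup N_G(F)$, so its fairness equation $\sum f(N_G(v))=k$ only ever refers to labels that the algorithm controls. I would root every tree $T$ of the forest $F$ at an arbitrary internal vertex; degenerate components (single vertices and single edges) have no internal vertex and are handled directly. Then I would process the vertices of each $T$ in non-increasing order of depth, i.e.\ from the leaves towards the root.

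The propagation itself I regard as the routine part. For a leaf $\ell$ of $F$ the only neighbor of $\ell$ whose label is not fixed by $f'$ is its parent $p$, and every other neighbor lies in $N_G(F)$; hence the equation at $\ell$ reads $f''(p)=k-\sum f''(N_G(\ell)\setminus\{p\})$ and forces $f''(p)$. Inductively, when a non-root vertex $v$ is processed, all of its children (which are deeper) and all of its external neighbors already carry labels, so the equation at $v$ forces the label of $\mathrm{parent}(v)$; processing all non-root vertices in this order assigns a label to every internal vertex (the root receives its label from any one of its children), and since each vertex and edge is touched a constant number of times the running time is $\OO(|V(G)|)$. Setting $S''=f''(N_G(F)\cup V(F))$ and noting that the labels on $N_G(F)\cup\leaf(F)$ are never overwritten gives $S'\subseteq S''$ and $f'=f''|_{N_G(F)\cup\leaf(F)}$.

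The main obstacle is obtaining the fairness constant $k$, which the algorithm needs before the very first propagation step yet which is not part of the input. I would handle this by first running the propagation symbolically, treating $k$ as an indeterminate: each internal label is then expressed as an affine function $a_v k+b_v$ with integer coefficients computed along the way, and each tree's root, whose own equation was not consumed during propagation, contributes one further linear equation in $k$. When $G$ is $S$-fair this equation is consistent, and solving for $k$ from it (or from the most constrained tree) yields the unique $S$-fairness constant. The delicate point I would have to argue carefully is that, in the configurations that actually arise, $k$ is genuinely pinned down rather than left as a free parameter; here I would lean on the fact that correctness is only claimed when $G$ is $S$-fair, together with the uniqueness of the fairness constant and Lemma~\ref{lem:degOneMagic}, to exclude the degenerate balanced subtrees in which the boundary data alone fails to determine $k$.

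Finally, correctness in the $S$-fair case follows by induction along the processing order: if $f'=f|_{N_G(F)\cup\leaf(F)}$ for some $f\in\calM$, then at each step the value forced by the fairness equation must equal $f$'s value at that vertex, because $f$ satisfies the same equation with the same (already matched) neighboring labels and the same constant $k$. Hence $f''=f|_{N_G(F)\cup V(F)}$, as required.
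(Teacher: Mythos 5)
Your propagation scheme---root each nontrivial tree of $F$ at an internal vertex, process vertices from the deepest level upward, and use the fairness equation at an already-labelled vertex to force the label of its parent---is exactly the recursion in the paper's proof (the paper phrases it from the parent's side, choosing one child $w$ of $v$ and setting $f_T(v)=k-\sum f_T(\child_T(w)\setminus\leaf(T))-\sum f'(\child_T(w)\cap\leaf(T))-\sum f'(N_G(w)\setminus V(T))$, but the computation and the $\OO(|V(G)|)$ accounting are the same), and your correctness induction for the $S$-fair case matches the paper's as well.

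The genuine gap is your treatment of the fairness constant $k$. You propose to run the propagation with $k$ as an indeterminate and recover $k$ from the leftover equation at each root, conceding that you would ``have to argue carefully'' that $k$ is pinned down. In general it is not. Take a component of $F$ that is a path $\ell_1 - r - w - \ell_2$ with $\ell_1,\ell_2\in\leaf(F)$: the equations at $\ell_1$ and $\ell_2$ give $f(r)=k-e_1$ and $f(w)=k-e_2$, where $e_i=\sum f'(N_G(\ell_i)\setminus V(F))$, and substituting these into the equations at $r$ and at $w$ makes $k$ cancel on both sides, so every equation of this component is either consumed by the propagation or degenerates into a condition on the boundary data alone. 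If every component of $F$ is of this kind, your symbolic pass leaves $k$ a free parameter, the internal labels $k-e_i$ are not determined, and the returned $f''$ need not agree with $f|_{N_G(F)\cup V(F)}$---which is exactly what the ``moreover'' clause requires. Lemma~\ref{lem:degOneMagic} does not exclude this situation, since the leaves of such a component may have neighbours outside $F$ and hence degree greater than $1$ in $G$. The paper sidesteps the issue by not inferring $k$ from $F$ at all: it takes $k$ to be \emph{the} $S$-fairness constant of $(G,S)$, which is unique by the cited result of O'Neal and Slater and is treated as available to the algorithm before the propagation starts. Replacing your symbolic recovery of $k$ with that assumption closes the gap and leaves the rest of your argument intact.
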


\begin{proof}
Let $(G,k)$ be an instance of \fair. Let $k$ be the required $S$-fairness constant. Let ${\cal F} =\{T_1, T_2, \ldots, T_t\}$ be the set of connected components of $F$. For every tree $T \in {\cal F}$, we do the following. Let $r$ be an arbitrarily chosen non-leaf vertex of $T$. Then, consider $T$ as a rooted tree with $r$ as the root vertex. Let $d$ be the depth of the tree $T$. We partition the vertex set $V(T) = V_1 \cup V_2 \ldots \cup V_d$, such that $V_i$ contains all the vertices of $T$ at depth $i$. Note that $V_1 = \{r\}$ and $V_d \subseteq \leaf(T)$. Now, consider a vertex $v \neq r$ in $T$. We partition $N_G(v) = \{p_v\} \cup \big(\child_T(v) \cap \leaf(T)\big) \cup \big(\child_T(v) \setminus \leaf(T)\big) \cup \big(N_G(v) \setminus V(T)\big)$, where $p_v$ is the parent of $v$ in $T$, $\child_T(v) \cap \leaf(T)$ is the set of children of $v$ in $T$ which are leaves of $T$, $\child_T(v) \setminus \leaf(T)$ is the set of children of $v$ in $T$ which are non-leaf vertices of $T$ and $N_G(v) \setminus T$ is the set of neighbors of $v$ not in $T$.

Given $f'$, we define another function $f_T$ on $V' = V(T) \setminus \leaf(T)$ recursively as follows.
\begin{itemize}[(i)]
	\item {\bf Base Case:} For all $v \in V'$ such that $v$ has a leaf child, let $w$ be an arbitrarily chosen leaf child of $v$. Then, $f_T(v) = k - \sum f'(N_G(w) \setminus V(T))$. 
	\item {\bf Recursive Step:} For all $v \in V'$ such that $v$ doesn't have any leaf child, let $w$ be an arbitrarily chosen child of $v$. Then, $f_T(v) = k - \sum f_T(\child_T(w) \setminus \leaf(T)) - \sum f'(\child_T(w) \cap \leaf(T)) - \sum f'(N_G(w) \setminus T) $.
\end{itemize}
Note that, if $v \in V_i$, then $\child_T(v) \in V_{i+1}$. So, we compute $f_T$ by processing vertices of $T$ in the order $V_{d-1}, \ldots, V_1$. We now define $f''$ from $N_G(F) \cup V(F)$ to $S'' = S' \cup f_{T_1} \cup f_{T_2} \ldots f_{T_t}$ as follows.
\begin{itemize}[(i)]
	\item For every $v \in N_G(F) \cup L, f''(v) = f'(v)$.
	\item For every $i \in [t]$ and $v \in V(T_i) \setminus \leaf(T_i), f''(v) = f_{T_i}(v)$.  
\end{itemize}
Clearly, $f' = f''|_{V'}$ and therefore $S' \subseteq S''$. The above recursive procedure visits every vertex of $G$ at most once, so it runs in time $\OO(|V(G)|)$.

Now, suppose that $G$ is an $S$-fair graph and $f' = f|_{N_G(F) \cup \leaf(F)}$ for some $f \in \calM$. As $f \in \calM$, for every $T \in {\cal F}$ and $v \in V(T), \sum f(N(v)) = k$. Consider a tree $T \in {\cal F}$. Let $v$ be a non-leaf vertex of $T$. Then, for all $w \in \child_T(v), \sum f(N_G(w)) = k \Rightarrow f(v) + \sum f(\child_T(w) \setminus \leaf(T)) - \sum f(\child_T(w) \cap \leaf(T)) - \sum f(N_G(w) \setminus V(T)) = k$. As $f' = f|_{N_G(F) \cup \leaf(F)}$, for all $v \in V(T) \setminus \leaf(T)$ and $w \in \child_T(v), f(v) = k - \sum f(\child_T(w) \setminus \leaf(T)) - \sum f'(\child_T(w) \cap \leaf(T)) - \sum f'(N_G(w) \setminus V(T))$. If $w$ is a leaf node, then $\child_T(w) = \emptyset$. So, we can write $f|_{N_G(F) \cup V(F)}$ as follows.

\begin{itemize}[(i)]
	\item For every $v \in N_G(F) \cup L, f(v) = f'(v)$.
	\item For every $i \in [t]$ and $v \in V(T_i) \setminus \leaf(T_i)$,
		\begin{itemize}
			\item if $v$ has a leaf child $w$, then $f(v) = k - \sum f'(N_G(w) \setminus V(T))$.
			\item else, let $w$ be a child of $v$, then $f(v) = k - \sum f(\child_T(w) \setminus \leaf(T)) - \sum f'(\child_T(w) \cap \leaf(T)) - \sum f'(N_G(w) \setminus V(T))$. 
		\end{itemize} 
\end{itemize}
As $f''$ and $f|_{N_G(F) \cup V(F)}$ have same base case and recursive step, $f'' = f|_{N_G(F) \cup V(F)}$. This also implies that $S'' \subseteq S$.
\end{proof}

The following corollary directly follows from the above lemma.

\begin{corollary}\label{cor:checkExtendingToForest}
Let $G$ be a graph and $S$ be multiset of positive integers. Let $F$ be an induced subgraph of $G$ that is a forest. Then, in time $\OO(\alpha(S)^{|N_G(F)| + |\leaf(F)|} \cdot |V(G)|)$, we can compute a superset of the set $\cal G$ of functions from $N_G(F) \cup V(F)$ to $S$ such that for every $f \in \calM$, there exists a $g \in {\cal G}$ such that $g = f|_{N_G(F) \cup V(F)}$.
\end{corollary}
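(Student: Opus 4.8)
The plan is to reduce the corollary to a direct application of Lemma~\ref{lem:extendToForest} combined with a brute-force enumeration over all possible labelings of the ``boundary'' set $N_G(F) \cup \leaf(F)$. The key observation is that Lemma~\ref{lem:extendToForest} takes a bijection $f'$ defined only on $N_G(F) \cup \leaf(F)$ and deterministically extends it (in linear time) to a function $f''$ on all of $N_G(F) \cup V(F)$; crucially, the lemma guarantees that if $f'$ happens to be the restriction of some genuine solution $f \in \calM$ to the boundary, then the computed extension $f''$ agrees with $f$ on the whole of $N_G(F) \cup V(F)$. So the only information we are missing, relative to a true solution, is the labeling of the boundary vertices, and this is exactly what we enumerate.

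First I would bound the number of boundary labelings. Each of the $|N_G(F)| + |\leaf(F)|$ boundary vertices must receive a label from $S$, but since only the \emph{value} of the label matters for the computation in Lemma~\ref{lem:extendToForest} (not which specific multiset-occurrence is used), there are at most $\alpha(S)$ distinct choices per boundary vertex. Hence the total number of distinct value-assignments to the boundary is at most $\alpha(S)^{|N_G(F)| + |\leaf(F)|}$. For each such assignment $f'$, I would run the algorithm of Lemma~\ref{lem:extendToForest}, obtaining a candidate extension in time $\OO(|V(G)|)$, and collect all these extensions into the set $\cal G$. This gives a total running time of $\OO(\alpha(S)^{|N_G(F)| + |\leaf(F)|} \cdot |V(G)|)$, matching the claimed bound.

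Next I would verify the covering property: for every $f \in \calM$, there is some $g \in \cal G$ with $g = f|_{N_G(F) \cup V(F)}$. Given $f \in \calM$, let $f' = f|_{N_G(F) \cup \leaf(F)}$ be its restriction to the boundary. Since $f'$ uses only labels that are values appearing in $S$, it corresponds (up to the choice of specific occurrences) to one of the enumerated boundary value-assignments, so the algorithm of Lemma~\ref{lem:extendToForest} is invoked on this $f'$ during the enumeration. By the ``moreover'' clause of that lemma, the returned extension $f''$ equals $f|_{N_G(F) \cup V(F)}$. This $f''$ is precisely the desired $g \in \cal G$.

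The one point requiring care, which I expect to be the only real subtlety, is the passage from bijections to mere value-assignments: Lemma~\ref{lem:extendToForest} is phrased in terms of a \emph{bijection} $f'$ onto a subset $S'$ of $S$, whereas the enumeration naturally ranges over functions that assign values from the $\alpha(S)$ distinct labels without tracking multiplicities. I would resolve this by noting that the extension computed in Lemma~\ref{lem:extendToForest} depends only on the label \emph{values} $f'$ assigns, so it is harmless to enumerate value-assignments rather than occurrence-faithful bijections; this is also why $\cal G$ is only guaranteed to be a \emph{superset} of the restrictions of solutions (some enumerated $f'$ will fail to be extendable to a valid bijection into $S$, but that merely produces spurious elements of $\cal G$, which the statement permits). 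Consequently every $f \in \calM$ is covered, and the corollary follows.
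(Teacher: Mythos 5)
Your proposal is correct and follows essentially the same route as the paper: enumerate the at most $\alpha(S)^{|N_G(F)|+|\leaf(F)|}$ labelings of $N_G(F)\cup\leaf(F)$, extend each via Lemma~\ref{lem:extendToForest} in $\OO(|V(G)|)$ time, and invoke the ``moreover'' clause to get the covering property. Your extra remark on value-assignments versus occurrence-faithful bijections is a valid clarification of the same argument, not a different approach.
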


\begin{proof}
As the number of different functions $f'$ from $N_G(F) \cup \leaf(F)$ to $S$ is $\OO(\alpha(S)^{|N_G(F)| + |\leaf(F)|})$, we can compute a superset of $\cal G$ by repeatedly applying Lemma~\ref{lem:extendToForest} for each such $f'$ and adding $f''$ to $\cal G$ if $S'' \subseteq S$.
\end{proof}

\begin{lemma}\label{lem:alphaDeltaStar}
The \fair\ problem is \FPT parameterized by $\alpha + \Delta$ for disjoint union of stars. 
\end{lemma}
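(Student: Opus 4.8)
The plan is to reduce the problem to {\sc Integer Linear Programming Feasibility} and apply Theorem~\ref{the:runningTimeILP}. First I would preprocess the instance using Observation~\ref{obs:disjointMagic}: since a disjoint union of stars is $S$-fair if and only if each star is $S_i$-fair with a common fairness constant $k$, it suffices to decide, for a candidate value of $k$, whether the collection of stars admits a valid label distribution summing to $k$ at each star-vertex and each leaf. The key structural observation is that, by Lemma~\ref{lem:degOneMagic} (or directly by the star structure), in any valid assignment the star-vertex of an $n$-star must receive label $k$ (the sum of the leaves' labels) and each leaf must receive label $k$ as well if the star has more than one edge --- more carefully, for an $n$-star $K_{1,n}$ with $n \geq 2$, every leaf sees only the star-vertex, so each leaf's fairness sum equals the star-vertex's label, forcing all leaves to ``see'' the same value; and the star-vertex sees the sum of its $n$ leaves. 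This rigidly constrains which labels can sit where.

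Next I would exploit that $\alpha = \alpha(S)$ is a parameter: the labels come from only $\alpha$ distinct values, say $a_1, \ldots, a_\alpha$ with multiplicities $\alpha_S(a_1), \ldots, \alpha_S(a_\alpha)$. Because $\Delta$ is also a parameter, every star has at most $\Delta$ leaves, so there are at most $\Delta$ possible star sizes, and for each size the ``type'' of a star is determined by how many leaves of each of the $\alpha$ label-values it receives. The number of such types is bounded by a function of $\alpha$ and $\Delta$ alone (at most $\binom{\alpha + \Delta}{\Delta}$ multisets of leaf-labels per star size). I would introduce an integer variable for each (star-size, leaf-label-type, star-vertex-label) combination counting how many stars are assigned that configuration; the total number of variables is bounded by a function of $\alpha$ and $\Delta$. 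The ILP constraints are: (i) nonnegativity; (ii) for each star size, the number of configurations used equals the number of input stars of that size; (iii) the global label-consumption constraint, namely that across all star-vertices and all leaves, each distinct value $a_j$ is used exactly $\alpha_S(a_j)$ times; and (iv) the fairness constraint built into which configurations are ``legal'' (leaf-label-multiset summing to $k$ at the star-vertex, and star-vertex label equal to each leaf's required fairness sum).

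The subtlety I must handle is the fairness constant $k$ itself, which is not known in advance. By Observation on uniqueness (O'Neal--Slater) the constant is unique if it exists, but I still need to enumerate or determine candidate values. Since any leaf's fairness sum equals its unique neighbor's label, and any star-vertex's fairness sum equals the sum of its $n$ leaf-labels, the admissible value of $k$ must equal some label value appearing in $S$ (it is the label on a star-vertex, which must itself lie in $S$ as a value seen by a leaf), so there are at most $\alpha$ candidate values of $k$ to try. For each candidate $k$ I set up and solve the bounded-variable ILP above; the whole procedure runs $\alpha$ times, each an \FPT call by Theorem~\ref{the:runningTimeILP}, giving overall \FPT running time in $\alpha + \Delta$.

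The main obstacle I anticipate is the careful bookkeeping of degenerate small stars. An isolated edge $K_{1,1}$ (and single-vertex components, already excluded) behaves differently: in $K_{1,1}$ each endpoint is the other's sole neighbor, so both fairness sums equal the two labels, forcing both endpoints to carry label $k$ and the two labels to be equal to each other's value --- i.e.\ $k$ equals each label, which only constrains things when several such edges share the label pool. I would treat $K_{1,1}$ as a special star-size case in the type-enumeration so that its fairness condition ($f(u)=f(v)$ with $f(u)=f(v)=k$, contradicting distinct positivity unless handled as label-value $k$ on both) is encoded correctly among the ILP configurations. Once this edge case is folded into the configuration enumeration, the reduction to a bounded-variable ILP is routine and Theorem~\ref{the:runningTimeILP} closes the argument.
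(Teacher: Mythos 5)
Your proposal is correct and follows essentially the same route as the paper's proof: a bounded-variable ILP whose variables count, for each star size, how many stars receive each feasible multiset of leaf-labels summing to $k$, with consistency constraints tying these counts to the number of stars of each size and to the label multiplicities in $S$, closed off by Theorem~\ref{the:runningTimeILP}. The only substantive addition is your explicit enumeration of candidate fairness constants (the paper simply takes $k$ as given, though it is determined, e.g.\ by $k=\sum S/(2t)$ for $t$ stars), and your worry about $K_{1,1}$ and ``distinctness'' is moot since $S$ is a multiset and repeated label values are permitted.
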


\begin{proof}
The \FPT algorithm is based on ILP. We first give the algorithm and then prove its correctness. \smallskip\\
{\bf Algorithm:} Let $(G,k)$ be an instance of \fair\ for disjoint union of stars. Thus, $G = K_{1, n_1} + K_{1, n_2} + \ldots + K_{1, n_t}$ for some $t, n_1, n_2, \ldots, n_t \in \mathbb{N}$. Note that $\Delta = \Delta(G) = $ max$(n_1, n_2, \ldots, n_t)$. Denote $V(K_{1, n_i}) = \{v_i\} \cup B_i$ where $v_i$ is the highest degree vertex in $K_{1, n_i}$ and $B_i$ is the set of all other vertices in $K_{1, n_i}$, for every $i \in [t]$. Let $k$ be the required $S$-fairness constant of $G$. By Observations~\ref{obs:disjointMagic} and~\ref{obs:bipartiteMagic}, $G$ is $S$-fair if and only if there exists a bijection $f: V(G) \rightarrow S$ such that for all $i \in [t], f(v_i) = \sum f(B_i) = k$. So, $G$ is $S$-fair only if $\alpha_S(k) \geq t$. Let $S' = S \setminus \{s_1 = k, s_2 = k, \ldots, s_t = k\}$ and let $\widehat{\alpha} = \alpha(S')$. Let $\ell_1, \ell_2, \ldots, \ell_{\widehat{\alpha}}$ be the unique labels in $S'$. Let ${\cal D} = \{B_1, B_2, \ldots, B_t\}$. We partition $\cal D$ into ${\cal D}_1, {\cal D}_2, \ldots, {\cal D}_\Delta$ such that for every $i \in [\Delta]$, every set $B \in {\cal D}_i$ is of size $i$. As the number of unique labels are $\widehat{\alpha}$, for every $i \in [\Delta]$, any set in ${\cal D}_i$ can have at most $\widehat{\alpha}^i$ different label assignments. For every $i \in [\Delta],$ let ${\cal L}_i$ be the set of feasible label assignments for ${\cal D}_i$, i.e., the label assignments for any set in ${\cal D}_i$ for which the sum of the labels is $k$. For every $i \in [\Delta]$, every label assignment $la \in {\cal L}_i$ is a set $\{t^1_{la}, t^2_{la}, \ldots, t^{\widehat{\alpha}}_{la}\}$, where, for every $j \in [\widehat{\alpha}]$, $t^j_{la}$ denoted the number of times label $\ell_j$ is used in the label assignment $la$. For every $i \in [\Delta]$ and $la \in {\cal L}_i$, we have a variable $n_{i,la}$. For any $i \in [\Delta]$, $la \in {\cal L}_i$ and a function $f \in \calM$, $n_{i,la}$ represents the number of times label assignment $la$ is used in ${\cal D}_i$ for $f$. 
Then, the algorithm works as follows.
\begin{itemize}
	\item If $\alpha_S(k) < t$, then return {\sc False}.
	\item Solve the following ILP to find $n_{i,la}$, for every $i \in [\Delta]$ and $la \in {\cal L}_i$.
		\begin{equation}\label{equ:numberOfStars}
			\forall i \in [\Delta], \sum_{la \in {\cal L}_i} n_{i, la} = |{\cal D}_i|.
		\end{equation}
		\begin{equation}\label{equ:numberOfLabels}
			\forall j \in [\widehat{\alpha}], \sum_{i \in [\Delta]} \sum_{la \in {\cal L}_i} n_{i, la} \cdot t^j_{la} = \alpha_{S'}(\ell_j). 
		\end{equation}
		\begin{equation}
			\forall i \in [\Delta], \forall la \in {\cal L}_i;  n_{i, la} \geq 0.
		\end{equation}
	\item If the ILP returns a feasible solution, then return {\sc True}; otherwise, return {\sc False}.
\end{itemize}
\smallskip
{\bf Correctness:} Equation~\ref{equ:numberOfStars} ensures that for every $i \in [\Delta]$, the number of label assignments used in ${\cal D}_i$ is equal to the number of sets ${\cal D}_i$ has. Equation~\ref{equ:numberOfLabels} ensures that for every $j \in [\widehat{\alpha}]$, the number of times label $\ell_j$ is used is equal to the number of times it appears in $S'$. For every $i \in [\Delta]$ and for every $B \in {\cal D}_i$, all the vertices in $B$ have the same neighborhood which is just a single vertex. Thus by Observation~\ref{obs:sameNeighborhood}, it is sufficient to know the label assignment for $B$; we can arbitrarily assign labels to the vertices in $B$ once we have decided which labels to use for $B$. Keeping this interpretation in mind, we now prove the correctness, i.e. the algorithm returns {\sc True} if and only if $G$ is $S$-fair. 

Assume first that the algorithm returns {\sc True}. It means the ILP assigned non-negative integer values for the variables $n_{i,la}$, $i \in [\Delta]$ and $la \in {\cal L}_i$ such that Equations~\ref{equ:numberOfStars} and~\ref{equ:numberOfLabels} are satisfied. As for every $i \in [\Delta]$, ${\cal L}_i$ is the set of feasible label assignments, this implies that we got a label assignment for every $B \in {\cal D}$ such that sum of the labels of the vertices in $B$ is $k$. Thus, $G$ is $S$-fair.

Conversely, let $G$ be $S$-fair. Then, by Observations~\ref{obs:disjointMagic} and~\ref{obs:bipartiteMagic}, there exists a bijection $f: V(G) \rightarrow S$ such that for all $i \in [t], f(v_i) = \sum f(B_i) = k$. So, $\alpha_S(k) \geq t$. Moreover, every $B \in {\cal D}$ has a feasible label assignment so the ILP admits a feasible solution. Thus, the algorithm will return {\sc True}. As the number of variables $n_{i,la}$ is $\OO(\Delta \cdot \widehat{\alpha}^\Delta)$, by Theorem~\ref{the:runningTimeILP}, the \fair\ problem is \FPT parameterized by $\alpha + \Delta$ for disjoint union of stars. 
\end{proof}

\begin{theorem}\label{the:fvsAlphaDelta}
The \fair\ problem is \FPT parameterized by $\fvs + \alpha + \Delta$.
\end{theorem}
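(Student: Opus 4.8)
The plan is to reduce the general problem to the disjoint-union-of-stars case handled in Lemma~\ref{lem:alphaDeltaStar}, by branching over the interaction between a minimum feedback vertex set and the forest that remains after its removal. Let $X$ be a minimum feedback vertex set of $G$, so $|X| = \fvs(G)$ and $F = G[V(G) \setminus X]$ is a forest. First I would guess the labels assigned to the vertices of $X$: since $|X| \le \fvs$ and each label comes from one of $\alpha$ distinct values, there are at most $\alpha^{\fvs}$ such guesses, which is a function of the parameters alone. For each guess I fix the restriction $f|_X$.

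Second, with $f|_X$ fixed, I would turn to the forest $F$ and apply the machinery of Lemma~\ref{lem:extendToForest} and Corollary~\ref{cor:checkExtendingToForest}. The key observation is that $N_G(F) \subseteq X$, so $|N_G(F)| \le \fvs$, and because $\Delta$ is bounded, the set of leaves of $F$ that are \emph{not} already pinned down interacts with the rest of the graph only through $X$. The delicate point is bounding $|\leaf(F)|$: in general a forest can have many leaves, so I cannot afford to guess all leaf labels directly. The idea is that once $f|_X$ is fixed, every non-star tree component of $F$ is forced: by the same degree-propagation argument as in Lemma~\ref{lem:degOneMagic}, a tree component attached to $X$ has its labels essentially determined by the fairness equations as one walks inward from the leaves, via Lemma~\ref{lem:extendToForest}. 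What genuinely remains free are the components of $F$ that become isolated stars (or isolated vertices / small pieces) with respect to $G$, i.e.\ the leaves whose only neighbor lies inside the current component and whose labels are not constrained through $X$.

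Third, I would separate $F$ into (a) tree components having at least one vertex adjacent to $X$, and (b) tree components with no neighbor in $X$. For type (a), applying Corollary~\ref{cor:checkExtendingToForest} with the bounded quantity $|N_G(F)| + (\text{number of relevant leaves})$ lets me enumerate, in \FPT time, all consistent ways to label these components and read off exactly which labels of $S$ they consume; each such global labeling of $X$ together with the type-(a) forest leaves a residual multiset $S'' \subseteq S$ to be placed on the type-(b) components. For type (b), the components are disconnected from $X$ entirely, so by Observation~\ref{obs:disjointMagic} the graph they induce must be $S''$-fair on its own with the same fairness constant $k$. Since each such component is a tree with minimum degree $1$, Lemma~\ref{lem:degOneMagic} forces every one of them to be a star, and then the residual problem is exactly \emph{disjoint union of stars}, solvable by Lemma~\ref{lem:alphaDeltaStar} in \FPT time parameterized by $\alpha + \Delta$.

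Putting it together, the overall algorithm guesses $f|_X$ ($\alpha^{\fvs}$ choices), extends it over the forest components touching $X$ using Corollary~\ref{cor:checkExtendingToForest} (\FPT in $\alpha, \fvs, \Delta$ because the number of border plus leaf slots to guess is bounded by a function of these parameters), verifies the fairness constant $k$ is consistent, computes the residual multiset, and then invokes Lemma~\ref{lem:alphaDeltaStar} on the remaining star forest. The main obstacle I anticipate is the careful bookkeeping in step two: I must argue rigorously that, after fixing $f|_X$, only a bounded (in the parameters) amount of additional guessing is needed on $F$ before everything is either forced by Lemma~\ref{lem:extendToForest} or handed off as an independent star-forest subproblem. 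In particular I need the degree bound $\Delta$ to control how many leaves attach to each internal vertex and to guarantee that the number of ``free'' star components' label \emph{types} (rather than their count, which can be large) stays bounded, so that the ILP in Lemma~\ref{lem:alphaDeltaStar} has a parameter-bounded number of variables.
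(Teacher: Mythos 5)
Your proposal is correct and follows essentially the same route as the paper's proof: decompose $G$ into the part $G_1$ whose forest components have all their leaves attached to the feedback vertex set (so $|\leaf(G_1[F])|\le \Delta\cdot\fvs$, and Corollary~\ref{cor:checkExtendingToForest} enumerates the $\OO(\alpha^{(\Delta+1)\fvs})$ consistent labelings) and a residual disjoint union of stars $G_2$ forced by Lemma~\ref{lem:degOneMagic}, which is then handed to the ILP of Lemma~\ref{lem:alphaDeltaStar} via Observation~\ref{obs:disjointMagic}. The ``delicate point'' you flag is resolved exactly as you suspect: every leaf of a non-star component must neighbor the feedback vertex set, which bounds the number of guessed leaf labels by $\Delta\cdot\fvs$.
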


\begin{proof}
Let $(G,k)$ be an instance of \fair. Let $k$ be the required $S$-fairness constant. Then, we compute a minimum feedback vertex set $FVS$ of $G$ in time $\OO(5^{|FVS|}\cdot|FVS|\cdot|V(G)|^2)$, using the algorithms given by Chen {\em et~al.}~\cite{DBLP:conf/wads/ChenFLLV07}. Let $\fvs = |FVS|$ and $F = V(G) \setminus FVS$. Then by definition of feedback vertex set, $G[F]$ is a forest. Let $\cal F$ be the set of connected components of $G[F]$. So, $\cal F$ is a collection of trees. 

Let $T$ be a tree in $\cal F$. Let $v$ be a leaf vertex of $T$ such that $v$ is not connected to any vertex in $FVS$, then $\degr_G(v) = 1$. If $\degr_G(v) = 1$, then by Lemma~\ref{lem:degOneMagic}, either $G$ is a star and $G = T$ or $G$ is a disjoint union of $T$ and $G[V(G) \setminus V(T)]$. By this argument, for any tree $T \in {\cal F}$, either all the leaves of $T$ have at least one neighbor in $FVS$ or none of the leaves are connected to any vertex in $FVS$. So, we can partition $G = G_1 + G_2$, where $G_1$ is a connected graph where all the leaves of the forest $G_1[F]$ have at least one neighbor in $FVS$ and $G_2$ is a disjoint union of stars. Note that, $G_2$ is an induced subgraph of $G[F]$. By Observation~\ref{obs:disjointMagic}, $G$ is $S$-fair if any only if $G_1$ is $S_1$-fair and $G_2$ is $S \setminus S_1$-fair, for some $S_1 \subseteq S$. 

Let $L$ be the set of leaves of $G_1[F]$. As $\Delta(G_1) \leq \Delta, |L| \leq \Delta \cdot \fvs$. By Corollary~\ref{cor:checkExtendingToForest}, we can compute in time $\OO(\alpha^{(\Delta+1)\fvs}\cdot |V(G)|)$, a superset ${\cal H}$ of the set ${\cal G}$ of functions from $V(G_1)$ to $S$ such that for every $f \in \calM$, there exists a $g \in {\cal G}$ such that $g = f|_{V(G_1)}$. We can compute ${\cal G}$ from ${\cal H}$ by going over every set $h \in {\cal H}$ and checking whether $G_1$ is fair under $h$. If ${\cal G} \neq \emptyset$, then $G_1$ is $g(V(G_1))$-fair for every function $g \in {\cal G}$. Then, for every function $g \in {\cal G}$, check whether $G_2$ is $\big( S \setminus g(V(G_1)) \big)$-fair, using Lemma~\ref{lem:alphaDeltaStar}. If for some function $g \in {\cal G}$, $G_2$ is $\big( S \setminus g(V(G_1)) \big)$-fair, then by Observation~\ref{obs:disjointMagic}, $G$ is $S$-fair. Also, by Corollary~\ref{cor:checkExtendingToForest} and Lemma~\ref{lem:alphaDeltaStar}, the \fair\ problem is \FPT parameterized by $\fvs + \alpha + \Delta$.
\end{proof}

We now prove that the \fair\ problem is \FPT parameterized by $\vc + \alpha$.

\begin{theorem}\label{the:vcAlpha}
The \fair\ problem is \FPT parameterized by $\vc + \alpha$.
\end{theorem}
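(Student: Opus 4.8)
The plan is to exploit the fact that a small vertex cover forces almost all vertices into an independent set whose members see only cover vertices. First I would compute a minimum vertex cover $X$ of $G$ with $|X| = \vc$, which is fixed-parameter tractable in $\vc$ by standard branching; recall that by the assumption at the start of this section $G$ has no isolated vertices. Let $I = V(G) \setminus X$; this is an independent set, so for every $u \in I$ we have $N_G(u) \subseteq X$ and $N_G(u) \neq \emptyset$. Call $T := N_G(u) \subseteq X$ the \emph{type} of $u$; there are at most $2^{\vc}$ distinct types. I would group $I$ by type and let $n_T$ denote the number of vertices of type $T$. If $I = \emptyset$ then $|V(G)| = \vc$ and the instance is solved by brute force over the at most $\alpha^{\vc}$ labelings, so assume $I \neq \emptyset$.

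Next I would enumerate every labeling $g \colon X \to \{\ell_1, \dots, \ell_\alpha\}$ of the cover, where $\ell_1, \dots, \ell_\alpha$ are the distinct values of $S$; there are at most $\alpha^{\vc}$ such $g$, and I discard any whose label multiplicities already exceed those in $S$. The key point is that fixing $g$ \emph{decouples} the two families of fairness constraints. For any $u \in I$ all of $N_G(u)$ lies in $X$, so $\sum f(N_G(u))$ is already determined by $g$; hence the fairness constant must equal $\sum_{w \in T} g(w)$ for every nonempty type $T$. I therefore set $k := \sum_{w \in T_0} g(w)$ for an arbitrary nonempty type $T_0$ and discard $g$ unless $\sum_{w \in T} g(w) = k$ for all types $T$ with $n_T > 0$. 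Once this test passes, every independent-set vertex is automatically fair, and the only remaining conditions are the fairness constraints of the cover vertices.

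For each surviving $g$ I would decide the assignment on $I$ with an ILP. Introduce a variable $x_{T,j}$ for each type $T$ and each label index $j \in [\alpha]$, standing for the number of type-$T$ vertices of $I$ that receive $\ell_j$. The constraints are: $\sum_{j} x_{T,j} = n_T$ for every type $T$; $\sum_{T} x_{T,j} = \alpha_S(\ell_j) - m_j$ for every $j$, where $m_j$ is the number of cover vertices that $g$ labels $\ell_j$, so that $g$ and the $I$-assignment together realize exactly $S$; $\sum_{T \colon v \in T} \sum_{j} \ell_j \cdot x_{T,j} = k - \sum_{w \in N_G(v) \cap X} g(w)$ for every $v \in X$, encoding the fairness of $v$ with its cover-neighbors contributing the known constant; and $x_{T,j} \geq 0$. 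Because any two vertices of the same type have identical neighborhoods, Observation~\ref{obs:sameNeighborhood} lets us permute labels freely within a type, so whether a labeling of $I$ lies in $\calM$ depends only on the per-type label counts; consequently a feasible integral solution yields a bijection in $\calM$ and conversely. The number of variables is at most $2^{\vc} \cdot \alpha$, a function of the parameters alone, so by Theorem~\ref{the:runningTimeILP} each ILP is solvable in time $f(\vc,\alpha)\cdot |V(G)|^{\OO(1)}$, and the overall running time is $\alpha^{\vc} \cdot f(\vc,\alpha) \cdot |V(G)|^{\OO(1)}$, establishing the claimed fixed-parameter tractability.

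The step I expect to be most delicate is arranging the ILP so that its dimension is controlled by $\vc$ and $\alpha$ rather than by $|I|$: the right move is to track the independent set only in aggregate through the per-type, per-label counts $x_{T,j}$, and to verify that the label-count equalities together with the type equalities force the combined assignment of $X$ and $I$ to use precisely the multiset $S$ (their sums over all $j$ both yield $|I|$). The observation that fixing $g$ simultaneously pins down the constant $k$ and renders every independent-set fairness constraint trivial is what makes this aggregation sound.
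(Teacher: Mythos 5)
Your proof is correct and follows essentially the same route as the paper: guess the labeling $g$ of the vertex cover ($\alpha^{\vc}$ choices), group the independent set into at most $2^{\vc}$ neighborhood types, and solve an ILP in the per-type, per-label counts, invoking Observation~\ref{obs:sameNeighborhood} to justify the aggregation and Theorem~\ref{the:runningTimeILP} for the running time. If anything, your version is slightly more careful than the paper's in two spots: you derive the fairness constant $k$ from the type sums under $g$, and your cover-vertex constraint explicitly includes the contribution $\sum_{w \in N_G(v) \cap X} g(w)$ of cover--cover edges, which the paper's Equation~\ref{equ:VCSum} leaves implicit.
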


\begin{proof}
The \FPT algorithm is based on ILP. We first give the algorithm and then prove its correctness. \smallskip\\
{\bf Algorithm:} Let $(G,k)$ be an instance of \fair. Let $k$ be the required $S$-fair sum. Let $S'$ be the set of unique labels in $S$. Note that $|S'| = \alpha(S)$. Let $VC \subseteq V(G)$ be a vertex cover of $G$ of size $\vc$. Let $I = V(G) \setminus VC$. By the definition of vertex cover, $I$ is an independent set of $G$, i.e., no two vertices in $I$ have an edge between them. We partition $I$ into $I_1, I_2, \ldots, I_m$ such that for every $i \in [m]$, $I_i$ is an inclusion-wise maximal set of vertices in $I$ which have the same neighborhood in $G$. As $I$ is a independent set, $m \leq 2^\vc$. For every $v \in VC$, we define a binary indicator set $\{t^v_1, t^v_2, \ldots, t^v_m\}$, where $t^v_j = 1$ if $v$ is adjacent to the vertices in $I_j$, otherwise $t^v_j = 0$, for every $j \in [m]$. 

By Observation~\ref{obs:sameNeighborhood}, if $G$ is $S$-fair and if we know the labels of $VC$ under some function $f \in \calM$, then for every $i \in [m]$, it is sufficient to know the number of times every label is used in $I_i$ under $f$ to get a bijective function $f' : V(G) \rightarrow S$ such that $f' \in \calM$. Keeping this insight in mind, let $n_{i,\ell}$ be a variable whose value (to be computed below) will be interpreted as the number of times label $\ell$ is used in $I_i$ for some function $f:V(G) \rightarrow S$, for every $\ell \in S', i \in [m]$. Then, the algorithm works as follows. 
\begin{itemize}
	\item Construct the set {\cal G} containing all possible functions $g:VC \rightarrow S$. Note that $|\cal G| \leq \alpha^\vc$.
	\item For every $g \in {\cal G}$ and $\ell \in S'$, let $\alpha_g(\ell)$ denote the number of times $\ell$ appears in $g(VC)$.
	\item For every $g \in {\cal G}$:
		\begin{itemize}
			\item Solve the following ILP to find an assignment to the variables $n_{i,\ell}$, for every $i \in [m], \ell \in S'$. \\
				\begin{equation}\label{equ:VCSum}
					\forall v \in VC, \sum_{i \in [m]} t^v_i \sum_{\ell \in S'} n_{i,\ell} \cdot \ell = k 	
				\end{equation}
				\begin{equation}\label{equ:SetCardinality}
				\forall i \in [m], \sum_{\ell \in S'}n_{i,\ell} = |I_i|
				\end{equation}
				\begin{equation}\label{equ:LabelCardinality}
				\forall \ell \in S', \sum_{i \in [m]}n_{i,\ell} = \alpha_S(\ell) - \alpha_g(\ell)
				\end{equation}
				\begin{equation}
					\forall i \in [m], \forall \ell \in S';  n_{i, \ell} \geq 0.
				\end{equation}
			\item If the ILP returns a feasible solution, then return {\sc True} if the following statement holds.
				\begin{equation}\label{equ:ISSum}
					\forall i \in [m], \sum_{v \in VC} t^v_i \cdot f(v) = k
				\end{equation}
		\end{itemize}
	\item  Return {\sc False}.

\end{itemize}
\smallskip
{\bf Correctness:} Equation~\ref{equ:VCSum} ensures that for every vertex in $VC$, the neighborhood sum is $k$. Equation~\ref{equ:SetCardinality} ensures that for every $i \in [m]$, the total number of labels used in $I_i$ is equal to the size of $I_i$. Equation~\ref{equ:LabelCardinality} ensures that for every unique label $\ell \in S'$, the total number of times it is used is equal to the number of times it appear in $S$. Finally, Equation~\ref{equ:ISSum} ensures that for every vertex in $IS$, the neighborhood sum is $k$. As for every $i \in [m]$, all the vertices in $I_i$ have the same neighborhood, we only check the sum once per $I_i$. Keeping this interpretation in mind, we now prove the correctness, i.e. the algorithm returns {\sc True} if and only if $G$ is $S$-fair. 

Assume first that the algorithm returns {\sc True}. It means the ILP assigned non-negative integer values to $n_{i,\ell}$, for every $i \in [m]$ and $\ell \in S'$, such that Equations~\ref{equ:VCSum},~\ref{equ:SetCardinality} and~\ref{equ:LabelCardinality} are satisfied as well as that Equation~\ref{equ:ISSum} returned {\sc True}. As explained above, that implies that for every vertex in $G$, the neighborhood sum is the same and equals to $k$. Thus, $G$ is $S$-fair. 

Conversely, let $G$ be $S$-fair. Then, there exists a bijection $f:V(G) \rightarrow S$ such that for every $v \in V(G), \sum f(N(v)) = k$. As $\cal G$ is the set of all possible functions from $VC$ to $S$, $f|_{VC} \in {\cal G}$, and hence there exists an iteration where the algorithm examines $g = f|_{VC}$. For $g = f|_{VC}$, the ILP admits a feasible solution. Moreover, Equation~\ref{equ:ISSum} then holds. Thus, the algorithm will return {\sc True}. As $|\cal G| \leq \alpha^\vc$ and the number of variables $n_{i,\ell}$ is at most $2^\vc \cdot \alpha$, by Theorem~\ref{the:runningTimeILP}, the \fair\ problem is \FPT parameterized by $\vc + \alpha$.
\end{proof} 

We now prove that the \fair\ problem is \FPT parameterized by $\fvs$ for regular graphs. 

\begin{theorem}\label{the:fvsrRegular}
The \fair\ problem is \FPT parameterized by $\fvs$ for regular graphs.
\end{theorem}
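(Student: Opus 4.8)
The plan is to reduce every case to Theorem~\ref{the:fvsAlphaDelta} by showing that, for an $r$-regular graph, the two ``extra'' parameters $\alpha$ and $\Delta$ are each bounded by a function of $\fvs$ (or the instance is a trivial \noinstance). First I would compute the forced fairness constant $k = r\cdot\sum S/|V(G)|$ via Observation~\ref{obs:rRegularMagicConstant}, rejecting immediately if this is not a positive integer, and compute a minimum feedback vertex set $FVS$ of size $\fvs$ in time $\OO(5^{\fvs}\cdot \fvs\cdot |V(G)|^2)$ using the algorithm of Chen {\em et~al.}~\cite{DBLP:conf/wads/ChenFLLV07}. The argument then splits on the regularity $r$. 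The cases $r\le 1$ are trivial: $r=0$ gives an edgeless graph, which is always $S$-fair, and $r=1$ gives a disjoint union of edges, which is $S$-fair if and only if $S$ is a constant multiset.

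For the case $r\ge 3$ I would prove the purely structural bound $|V(G)| < 4\fvs$. Writing $F = V(G)\setminus FVS$, the graph $G[F]$ is a forest, so $\sum_{v\in F}\deg_{G[F]}(v) = 2|E(G[F])| < 2|F|$. Since every vertex of $F$ has degree $r$ in $G$, the number of edges between $F$ and $FVS$ equals $\sum_{v\in F}\big(r-\deg_{G[F]}(v)\big) > (r-2)|F|$, while on the other hand this number is at most $\sum_{u\in FVS}\deg_G(u) = r\cdot\fvs$. Hence $(r-2)|F| < r\cdot\fvs$, and since $r/(r-2)\le 3$ for all $r\ge 3$ we get $|F| < 3\fvs$ and $|V(G)| = |F|+\fvs < 4\fvs$. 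Consequently $\alpha \le |S| = |V(G)| < 4\fvs$ and $\Delta = r < |V(G)| < 4\fvs$, so $\fvs+\alpha+\Delta = \OO(\fvs)$ and Theorem~\ref{the:fvsAlphaDelta} decides the instance in \FPT time (a brute force over all bijections would also suffice, since the whole instance now has size $\OO(\fvs)$).

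The remaining case $r=2$, where $G$ is a disjoint union of $c$ cycles and $|V(G)|$ need \emph{not} be bounded, is the main obstacle, and I would handle it by bounding $\alpha$ rather than $|V(G)|$. Here $\fvs = c$, since exactly one vertex per cycle is necessary and sufficient to make $G$ acyclic. By Observation~\ref{obs:disjointMagic}, if $G$ is $S$-fair then each cycle $C^{(i)}$ is $S_i$-fair for some partition $\biguplus_i S_i = S$ sharing the common constant $k$; and by Lemma~\ref{lem:cycleDistanceMagic} each $S_i$ contains at most four distinct labels (the values $a,b,k-a,k-b$ when the length is divisible by $4$, and the single value $k/2$ otherwise). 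Therefore $\alpha = \big|\bigcup_i \{\text{distinct labels of } S_i\}\big| \le 4c = 4\fvs$. Thus, if $\alpha > 4\fvs$ I reject immediately; otherwise $\fvs+\alpha+\Delta \le \fvs + 4\fvs + 2 = \OO(\fvs)$ and I again invoke Theorem~\ref{the:fvsAlphaDelta}.

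Combining the cases yields an algorithm that is \FPT parameterized by $\fvs$. The conceptually interesting step is the degree bound for $r\ge 3$, which shows that high regularity forces a feedback vertex set of size $\Omega(|V(G)|)$ and so collapses the instance to bounded size; once this is in place, the only genuinely separate work is the $r=2$ analysis, whose crux is precisely the observation extracted from Lemma~\ref{lem:cycleDistanceMagic} that a fair cycle uses at most four labels, which lets me bound $\alpha$ by $4\fvs$ and fall back on Theorem~\ref{the:fvsAlphaDelta}.
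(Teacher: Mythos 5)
Your proof is correct and follows essentially the same three-way case split as the paper's: $r=1$ handled directly, $r=2$ by using Lemma~\ref{lem:cycleDistanceMagic} to bound $\alpha \le 4\fvs$ and then invoking Theorem~\ref{the:fvsAlphaDelta} with $\Delta=2$, and $r\ge 3$ by an edge-counting argument showing $|V(G)|=\OO(\fvs)$ so the instance collapses to bounded size. The only substantive divergence is your characterization for $r=1$ --- that $tP_2$ is $S$-fair iff $S$ is the constant multiset $\{k,\ldots,k\}$ --- which is in fact the correct condition (each vertex's unique neighbour must carry label $k$), whereas the paper's stated condition $S=\biguplus_{i\in[t]}\{a_i,k-a_i\}$ is not; either way the case is polynomial-time checkable, so the theorem is unaffected.
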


\begin{proof}
Let $r \in \mathbb{N}$. Let $(G,k)$ be an instance of \fair\ for $r$-regular graphs. Let $k$ be the required $S$-fairness constant. Then, we compute a minimum feedback vertex set $FVS$ of $G$ in time $\OO(5^{|FVS|}\cdot|FVS|\cdot|V(G)|^2)$, using the algorithms given by Chen {\em et~al.}~\cite{DBLP:conf/wads/ChenFLLV07}. Let $\fvs = |FVS|$ and $F = V(G) \setminus FVS$. Then, $G[F]$ is a forest. We distinguish $G$ into the following three cases based on the value of $r$.\smallskip\\
{\bf Case $1$ [$r = 1$]:} In this case, $G$ is a collection of edges, i.e., $G = tP_2$ for some $t \in \mathbb{N}$. Then, by Observation~\ref{obs:disjointMagic} and by the definition of $S$-fair labeling, $G$ is $S$-fair if and only if $S = \biguplus_{i \in [t]}\{a_i, k-a_i\}$ where for all $i \in [t], a_i \in \{1, 2, \ldots, k\}$. So, we can solve \fair\ problem in $\OO(|S|\log S) = \OO(|V(G)|\log |V(G)|)$ time by sorting $S$ and checking whether $S$ satisfies the above property. \smallskip\\
{\bf Case $2$ [$r = 2$]:} In this case, $G$ is a collection of cycles, i.e. $G = C_{n_1} + C_{n_2} + \ldots + C_{n_t}$ for some $t \in \mathbb{N}$. By the definition of feedback vertex set and the minimality of $FVS$, every cycle contains exactly one vertex from $FVS$. So, $t = \fvs$. Also, by Observation~\ref{obs:disjointMagic} and by Lemma~\ref{lem:cycleDistanceMagic}, for every $f \in \calM$, every cycle is assigned at most $4$ distinct labels, by $f$. So, $\alpha(S) \leq 4\fvs$. As $\Delta(G) = 2$ and $\alpha \leq 4\fvs$, by Theorem~\ref{the:fvsAlphaDelta}, the \fair\ problem is \FPT parameterized by $\fvs$ in this case.\smallskip\\
{\bf Case $3$ [$r \geq 3$]:} As $G[F]$ is a forest, $|E(F)| \leq |V(F)| - 1 \leq |V(G)| - 1$. Also, $G$ is a $r$-regular graph so  $|E(G)| = r \cdot |V(G)|/2$. This implies that, at least $(r/2-1)\cdot |V(G)| +1$ edges are incident to vertices of $FVS$. As every vertex of $FVS$ is incident to $r$ edges, $(r/2-1)\cdot |V(G)| +1 \leq r \cdot \fvs$. Since $r \geq 3$, we get that $|V(G)| = \OO(fvs)$. As the \fair\ problem can always be solved in time $\OO(|V(G)|!)$ using brute-force approach by going over all the permutations of labels, the \fair\ problem is \FPT parameterized by $\fvs$ in this case as well.
\end{proof}

Finally, we give a simple lemma proving that the \fair\ problem is \FPT parameterized by $\vc + \Delta$.

\begin{lemma}
The \fair\ problem is \FPT parameterized by $\vc + \Delta$.
\end{lemma}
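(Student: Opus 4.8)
The plan is to show that, once trivial cases are dealt with, the two hypotheses $\vc$ and $\Delta$ together force $|V(G)|$ to be bounded by a function of the parameters, after which the whole problem can be settled by brute force. First I would dispose of isolated vertices: by the observation opening Section~\ref{sec:fpt}, if $\delta(G)=0$ then $G$ is $S$-fair if and only if \emph{every} vertex is isolated, so (as assumed throughout the section) we may take $G$ to have no isolated vertex, i.e. $\delta(G)\ge 1$.

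Next I would extract a small vertex cover without even invoking an \FPT subroutine. Take a maximal matching $M$ of $G$; since any vertex cover must contain an endpoint of each of the pairwise-disjoint matched edges, we have $|M|\le\vc$, and hence $VC:=V(M)$ is a vertex cover with $|VC|\le 2\vc$, computable in linear time. Let $I=V(G)\setminus VC$. By the definition of a vertex cover, $I$ is independent, so every vertex of $I$ has all of its neighbors inside $VC$; and since $G$ has no isolated vertex, each vertex of $I$ is incident to at least one edge going to $VC$, whose $I$-endpoint is unique (because $I$ is independent). The number of edges incident to $VC$ is at most $\sum_{v\in VC}\deg_G(v)\le |VC|\cdot\Delta\le 2\vc\Delta$, and the map sending each vertex of $I$ to one such witnessing edge is injective. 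Consequently $|I|\le 2\vc\Delta$, and therefore $|V(G)|=|VC|+|I|\le 2\vc(1+\Delta)$, a bound depending only on $\vc$ and $\Delta$.

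Finally, since $|V(G)|$ is bounded by $g(\vc,\Delta):=2\vc(1+\Delta)$, I would decide whether $G$ is $S$-fair in time $\OO(|V(G)|!\cdot|V(G)|^2)$ by enumerating every bijection $f:V(G)\to S$ and checking the fairness condition $\sum f(N_G(v))=k$ at each vertex (equivalently, testing whether $\calM\neq\emptyset$); this running time has the form $h(\vc,\Delta)$, so the algorithm is \FPT. As an alternative to the brute force, the same size bound immediately gives $\alpha(S)\le|V(G)|\le 2\vc(1+\Delta)$, so one could instead reduce to Theorem~\ref{the:vcAlpha}. There is essentially no hard step here, which is why the statement is phrased as a simple lemma: the only point deserving care is the counting argument establishing $|I|\le 2\vc\Delta$, i.e. that the non-isolated independent-set vertices cannot outnumber the edges leaving the cover; everything else is routine.
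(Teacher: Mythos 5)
Your proof is correct and follows essentially the same route as the paper's: bound $|V(G)|$ by a function of $\vc$ and $\Delta$ (the paper states $|V(G)|\le\vc\cdot\Delta$ for a cover of size exactly $\vc$; your $2\vc(1+\Delta)$ via a maximal-matching cover is the same idea with a slightly looser constant and a more careful count), then decide $S$-fairness by brute force over all bijections. The extra details you supply (handling isolated vertices, injectivity of the vertex-to-edge map) only make the argument more self-contained.
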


\begin{proof}
Let $G$ be a graph of maximum degree $\Delta$ and let $VC \subseteq V(G)$ be a vertex cover of $G$ of size $\vc$. Then, by the definition of vertex cover, it is easy to see that $|V(G)| \leq \vc \cdot \Delta$. As the \fair\ problem can always be solved in time $\OO(|V(G)|!)$ using brute-force approach by going over all the permutations of labels, the \fair\ problem is \FPT parameterized by $\vc + \Delta$.
\end{proof}
\section{Conclusion and Future Research}\label{sec:conclusion}

In this paper, we initiated a systematic algorithmic study of \fair\ and presented a comprehensive picture of the parameterized complexity of the problem. We showed \textsf{NP}-hardness results on special graph classes, which implied that the problem is \paraH with respect to several combinations of structural graph parameters. We also showed that the problem is \FPT for some combinations of structural graph parameters. 

While our work is comprehensive, we stress that it also opens a whole new world of research questions within computational social choice. For illustration, let us mention a few such questions: 
\begin{enumerate}[(1)]
	\item Establishing the parameterized complexity of \fair\ with respect to $\tw + \Delta$.
	\item Establishing the parameterized complexity of \fair\ with respect to $\tw + \Delta + \alpha$. By employing the standard dynamic programming technique over tree decomposition, we can show that the problem is \FPT with respect to $\tw + \Delta$, when $\alpha$ is a constant. But, the parameterized complexity when $\alpha$ is not a constant is still open.
	\item Establishing the classical complexity of \fair\ where $S = [n]$, ,where $n$ denotes the number of vertices in the input graph.
	\item Studying the scenario where there is no input infrastructure graph and the objective is to construct one which is $S$-fair.
	\item Analysis of related labellings such as \{0,1\}-\fair~\cite{beena2009and} and vertex-bimagic labeling~\cite{babujee20141}.
	\item Introducing additional fairness notions for non-eliminating tournaments, perhaps by refining/extending/modifying the notion of $S$-fairness.
	\item Introducing manipulation and bribery to \fair.
\end{enumerate}
\newpage

\bibliographystyle{siam}
\bibliography{Refs}

\end{document}